\definecolor{codegreen}{rgb}{0,0.6,0}
\definecolor{codegray}{rgb}{0.5,0.5,0.5}
\definecolor{codepurple}{rgb}{0.58,0,0.82}
\definecolor{backcolour}{rgb}{0.95,0.95,0.92}
\lstdefinestyle{mystyle}{
	backgroundcolor=\color{backcolour},
	commentstyle=\color{codegreen},
	keywordstyle=\color{magenta},
	numberstyle=\tiny\color{codegray},
	stringstyle=\color{codepurple},
	basicstyle=\ttfamily\footnotesize,
	breakatwhitespace=false,
	breaklines=true,
	captionpos=b,
	keepspaces=true,
	numbers=left,
	numbersep=4pt,
	showspaces=false,
	showstringspaces=false,
	showtabs=false,
	tabsize=2,
	escapeinside={*@}{@*}
}
\newif\iffull\fullfalse
\newcommand{\post}{\mathsf{postE}\xspace}
\newcommand{\pre}{\mathsf{preE}\xspace}
\newcommand{\getfeatures}{\textsf{getFeatures}\xspace}
\newcommand{\getstates}{\textsf{sampleStates}\xspace}
\newcommand{\sample}{\textsf{sampleTraces}\xspace}
\newcommand{\learninv}{\textsf{learnInv}\xspace}
\newcommand{\extractinv}{\textsf{extractInv}\xspace}
\newcommand{\verifyinv}{\textsf{verifyInv}\xspace}
\newcommand{\pprog}{\prog\xspace}
\newcommand{\pexp}{\mathit{pexp}\xspace}
\newcommand{\feat}{\mathit{feat}\xspace}
\newcommand{\nruns}{N_{\mathit{runs}}\xspace}
\newcommand{\nstates}{N_{\mathit{states}}\xspace}
\newcommand{\states}{\mathit{states}\xspace}
\newcommand{\nndt}{\mathsf{nndt}\xspace}
\newcommand{\labeling}{\mathsf{nn}_{\text{label}}\xspace}
\newcommand{\regress}{\mathsf{nn}_{\text{regress}}\xspace}
\newcommand{\classify}{\mathsf{nn}_{\text{classify}}\xspace}
\newcommand{\nnmt}{\mathsf{nnmt}\xspace}
\newcommand{\expr}{e}
\newcommand{\dist}{\mu}
\newcommand{\Dist}[1]{\mathbf{Dist}(#1)}
\newcommand{\States}{\Sigma}
\newcommand{\Expt}{\mathcal{E}}
\newcommand{\Tree}{T}
\newcommand{\LANG}{$\mathbf{pWhile}$\xspace}
\newcommand{\prog}{\bench{geo}\xspace}
\newcommand{\weakp}{\mathsf{wpe}}
\newcommand{\lfp}{\mathsf{lfp}}
\newcommand{\unit}{\mathsf{unit}\xspace}
\newcommand{\bind}{\mathsf{bind}\xspace}
\newcommand{\defeq}{\coloneqq}
\newcommand{\defdefeq}{\coloneqq}
\newcommand{\Skip}{\mathbf{skip}}
\newcommand{\Seq}[2]{{#1} \mathrel{;} {#2}}
\newcommand{\Assn}[2]{\ensuremath{{#1} \leftarrow {#2}}}
\newcommand{\Rand}[2]{{#1} \stackrel{\raisebox{-.25ex}[.25ex]%
		{\tiny $\mathdollar$}}{\raisebox{-.2ex}[.2ex]{$\leftarrow$}} {#2}}
\newcommand{\while}[2]{\mathbf{while}\ #1\ \mathbf{:}\ #2}
\newcommand{\Cond}[3]{\mathbf{if}\ #1\ \mathbf{then}\ #2\ \mathbf{else}\ #3}
\newcommand{\bench}[1]{\textsf{#1}}
\newcommand{\kwtrue}{\mathit{tt}}
\newcommand{\kwfalse}{\mathit{ff}}
\newcommand{\Leaf}{\mathsf{Leaf}}
\newcommand{\Node}{\mathsf{Node}}
\newcommand{\geozero}{\bench{Geo0}\xspace}
\newcommand{\geoone}{\bench{Geo1}\xspace}
\newcommand{\geotwo}{\bench{Geo2}\xspace}
\newcommand{\fair}{\bench{Fair}\xspace}
\newcommand{\mart}{\bench{Mart}\xspace}
\newcommand{\gambler}{\bench{Gambler}\xspace}
\newcommand{\geoar}{\bench{GeoAr}\xspace}
\newcommand{\binzero}{\bench{Bin0}\xspace}
\newcommand{\binone}{\bench{Bin1}\xspace}
\newcommand{\bintwo}{\bench{Bin2}\xspace}
\newcommand{\linexp}{\bench{LinExp}\xspace}
\newcommand{\progsum}{\bench{Sum0}\xspace}
\newcommand{\deprv}{\bench{DepRV}\xspace}
\newcommand{\biasdir}{\bench{BiasDir}\xspace}
\newcommand{\prinsys}{\bench{Prinsys}\xspace}
\newcommand{\duel}{\bench{Duel}\xspace}
\newcommand{\detm}{\bench{Detm}\xspace}
\newcommand{\revbin}{\bench{RevBin}\xspace}
\newcommand{\denot}[1]{\llbracket #1 \rrbracket}
\newcommand{\tool}{\mbox{\textsc{Exist}}\xspace}
\newcommand{\fix}[1]{\textcolor{black}{#1}}%
\theoremstyle{thmstyleone}%
\newtheorem{theorem}{Theorem}
\newtheorem{proposition}[theorem]{Proposition}%
\theoremstyle{thmstyletwo}%
\theoremstyle{thmstylethree}%
\newtheorem{definition}{Definition}%
\begin{document}

\title[Article Title]{Data-Driven Invariant Learning for Probabilistic Programs}


\author[1]{\fnm{Jialu} \sur{Bao}}\email{jb965@cornell.edu}

\author[2]{\fnm{Nitesh} \sur{Trivedi}}\email{nitesht@iitk.ac.in}

\author[3]{\fnm{Drashti} \sur{Pathak}}\email{dpathak1997@gmail.com}

\author[1]{\fnm{Justin} \sur{Hsu}}\email{justin@cs.cornell.edu}

\author[2]{\fnm{Subhajit} \sur{Roy}}\email{subhajit@iitk.ac.in}

\affil[1]{\orgname{Cornell University}, \orgaddress{\city{Ithaca}, \state{NY}, \country{USA}}}

\affil[2]{\orgname{Indian Institute of Technology (IIT) Kanpur}, \orgaddress{\country{India}}}

\affil[3]{\orgname{Amazon}, \orgaddress{\state{Bangalore}, \country{India}}}


\abstract{Morgan and McIver's \emph{weakest pre-expectation} framework is one of the most well-established methods for deductive verification of probabilistic
	programs. Roughly, the idea is to generalize binary state assertions to
	real-valued \emph{expectations}, which can measure expected values of
	probabilistic program quantities. While loop-free programs can be analyzed by
	mechanically transforming expectations, verifying loops usually requires
	finding an \emph{invariant expectation}, a difficult task.

	We propose a new view of invariant expectation synthesis as a
	\emph{regression} problem: given an input state, predict the \emph{average}
	value of the post-expectation in the output distribution. Guided by this
	perspective, we develop the first \emph{data-driven} invariant synthesis
	method for probabilistic programs. Unlike prior work on probabilistic
	invariant inference, our approach can learn piecewise continuous invariants
	without relying on template expectations. We also develop a data-driven
	approach to learn \emph{sub-invariants} from data, which can be used to upper-
	or lower-bound expected values. We implement our approaches and demonstrate
	their effectiveness on a variety of benchmarks from the probabilistic
	programming literature.}

\keywords{Probabilistic programs, Data-driven invariant learning, Weakest pre-expectations}



\maketitle

\section{Introduction} \label{sec:intro}

\emph{Probabilistic programs} are imperative programs augmented with a \textit{sampling} command that allows programs to draw from probability distributions.
Probabilistic programs provide a natural way to express randomized computations. While
the mathematical semantics of such programs is fairly well-understood
\citep{Kozen81}, verification methods remain an active area of research.
Existing automated techniques are either limited to specific properties
(e.g.,~\cite{SHA18,AH17,DBLP:conf/oopsla/CarbinMR13,Kolahal}), or target simpler
computational
models~\cite{DBLP:conf/icalp/BaierCHKR97,kwiatkowska2011prism,DBLP:journals/corr/DehnertJK017}.

\paragraph*{Reasoning about Expectations.}
One of the earliest methods for reasoning about probabilistic programs is
through \emph{expectations}. Originally proposed by Kozen~\cite{Kozen:1985},
expectations generalize standard, binary assertions to quantitative, real-valued
functions on program states. Morgan and McIver further developed this idea into
a powerful framework for reasoning about probabilistic imperative programs,
called the \emph{weakest pre-expectation calculus}~\cite{Morgan:1996,McIver:2005}.

Concretely, Morgan and McIver defined an operator called the
\emph{weakest pre-expectation} ($\weakp$), which takes an expectation $E$ and a
program $P$ and produces an expectation $E'$ such that $E'(\sigma)$ is the expected value
of $E$ in the output distribution $\denot{P}_\sigma$. In this way, the $\weakp$
operator can be viewed as a generalization of Dijkstra's weakest
pre-conditions calculus~\cite{dijkstra-wp} to probabilistic programs. For
verification purposes, the $\weakp$ operator has two key strengths. First, it
enables reasoning about probabilities and expected values. Second,
when $P$ is a loop-free program, it is possible
to transform $\weakp(P, E)$ into a form that does not mention the program $P$
via simple, mechanical manipulations, essentially analyzing the effect of the
program on the expectation through syntactically transforming $E$.

However, there is a caveat: the $\weakp$ of a loop is defined as a least fixed
point, and it is generally difficult to simplify this quantity into a more
tractable form. Fortunately, the $\weakp$ operator satisfies a \emph{loop rule}
that simplifies reasoning about loops: if we can find an expectation $I$
satisfying an \emph{invariant} condition \fix{and some additional conditions},
then we can easily bound the $\weakp$
of a loop. Checking the invariant condition involves analyzing just the body of
the loop, rather than the entire loop. Thus, finding invariants is an important
obstacle towards automated reasoning about probabilistic programs.

\paragraph*{Discovering Invariants.}
Two prior works have considered how to automatically infer invariant
expectations for probabilistic loops. The first is
\textsc{Prinsys}~\cite{DBLP:conf/qest/GretzKM13}. Using a template with one
hole, \textsc{Prinsys} produces a first-order logical formula describing
possible substitutions satisfying the invariant condition. While effective for
their benchmark programs, the method's reliance on templates is limiting;
furthermore, the user must manually solve a system of logical formulas to find
the invariant.

The second work, by \citet{DBLP:conf/cav/ChenHWZ15}, focuses on inferring
polynomial invariants. They apply Lagrange interpolation theorem to find a polynomial
invariant. However, many invariants are not polynomials: for instance, an
invariant may combine two polynomials piecewise by branching on a Boolean
condition.

\paragraph*{Our Approach: Invariant Learning.}
We take a different approach inspired by data-driven invariant
learning for ``regular" programs~\cite{DBLP:conf/fm/FlanaganL01,daikon}.
In these methods, an invariant is seen as a classifier between a set of ``good" states that satisfy the specification, and a set of ``bad" states that violate the specification.
For training data, the program
is profiled to collect execution states. Then, the program is executed with a set of inputs to generate the training data. The invariant is synthesized using machine learning algorithm\fix{s} to
find a classifier between the ``good" and the ``bad" states". Data-driven techniques reduce the
reliance on templates, and can treat the program as a black box--- \fix{the} learner only needs to
execute the program to gather input and output data. But to extend the
data-driven method to the probabilistic setting, there exist significant challenges:
\begin{itemize}
	\item \textbf{Quantitative invariants.} While the logic of expectations
	resembles the logic of standard assertions, an important difference is that
	expectations are \emph{quantitative}: they map program states to real
	numbers, not a binary yes/no. While standard invariant learning is a
	\emph{classification} task (i.e., predicting a binary label given a program
	state), our probabilistic invariant learning is closer to a
	\emph{regression} task (i.e., predicting a number given a program state).
	\item \textbf{Stochastic data.} Standard invariant learning assumes the program
	behaves like a \emph{function}: a given input state always leads to the same
	output state. In contrast, a probabilistic program takes an input state to a
	distribution over outputs. Since we are only able to observe a single draw
	from the output distribution each time we run the program, execution traces
	in our setting are inherently \emph{noisy}. Accordingly, we cannot hope to
	learn an invariant that fits the observed data perfectly, even if the
	program has an invariant---our learner must be robust to noisy training
	data.
	\item \textbf{Complex learning objective.} To fit a probabilistic invariant to
	data, the logical constraints defining an invariant must be converted into a
	regression problem with a loss function suitable for standard machine
	learning algorithms and models. While typical regression problems relate the
	unknown quantity to be learned to known data, the conditions defining
	invariants are somehow self-referential: they describe how an unknown
	invariant must be related to itself. This feature makes casting invariant
	learning as machine learning a difficult task.
\item \textbf{Quality of examples.} If a candidate invariants fails the
  verification check, the generated counterexamples are added back to the
  data-set for learning a better invariant. However, if the generated
  counterexamples are ``close" to valid examples, the generated loss may not be
  enough to move the learning to a new invariant, thereby stalling progress. To
  enable progress, we pose the search for counterexamples as an optimization
  problem to search for the \textit{worst-case} counterexamples that would
  generate appreciable loss.
\end{itemize}

\paragraph*{Outline.}
After covering preliminaries (\cref{sec:prelim}) and stating our problem
(\cref{sec:problem}), we present our contributions.
\begin{itemize}
	\item A general method called \tool for learning invariants for
    \fix{almost surely terminating} probabilistic
	programs (\cref{sec:algorithm}). \tool executes the program multiple times on
	a set of input states, and then uses machine learning algorithms to learn
	models encoding possible invariants. A CEGIS-like loop is used to
	iteratively expand the dataset after encountering incorrect candidate
	invariants.
	\item Concrete instantiations of \tool tailored for handling two problems:
	learning \emph{exact invariants}~(\cref{sec:exactinv}), and learning
	\emph{sub-invariants}~(\cref{sec:subinv}). Our method for exact invariants
	learns a \emph{model tree}~\citep{quinlan1992learning}, a generalization of
	binary decision trees to regression. The constraints for sub-invariants are
	more difficult to encode as a regression problem, and our method learns a
	\emph{neural model tree}~\cite{DBLP:journals/corr/abs-1806-06988} with a
	custom loss function. While the models differ, both algorithms leverage
	off-the-shelf learning algorithms.
	\item An implementation of \tool and a thorough evaluation on a large set of
	benchmarks (\cref{sec:eval}). Our tool can learn invariants and
	sub-invariants for examples considered in prior work, and more difficult
	versions that are beyond the reach of prior work.
\end{itemize}
We discuss related work in \cref{sec:rw}.

\section{Preliminaries} \label{sec:prelim}

\paragraph*{Probabilistic Programs.}
We will consider programs written in \LANG, a basic probabilistic imperative
language with the following grammar:
\begin{align*}
	P \defdefeq \Skip
	\mid \Assn{x}{e}
	\mid \Rand{x}{d}
	\mid \Seq{P}{P}
	\mid \Cond{e}{P}{P}
	\mid \while{e}{P}
\end{align*}
Above, $x$ ranges over a countable set of variables $\mathcal{X}$, $e$ is an
expression, and $d$ is a distribution expression. Expressions are interpreted in
program states $\sigma : \mathcal{X} \to \mathcal{V}$, which map variables to a
set of values $\mathcal{V}$ (e.g., booleans, integers).
The semantics of probabilistic programs is
defined in terms of distributions. To avoid measure-theoretic technicalities, we
assume that $\mathcal{V}$ is countable.

\begin{definition}
	A (discrete) distribution $\dist$ over a countable set $S$ is a function of
	type $S \to \mathbb{R}_{\geq 0}$ satisfying $\sum_{s \in S}\dist(s) = 1$. We
	denote the set of distributions over $S$ by $\Dist{S}$.
\end{definition}

Let $\States$ denote the set of all program states
As is standard, programs $P$ are interpreted as maps $\denot{P} : \States \to
\Dist{\States}$. This definition requires two standard operations on
distributions.
\begin{definition}
	Given a set $S$, $\unit$ maps any $s \in S$ to the Dirac distribution on
	$s$, i.e., $\unit(s)(s') \defeq 1$ if $s = s'$ and $\unit(s)(s') \defeq 0$ if $s
	\neq s'$.

	Given $\dist \in \Dist{S}$ and a map $f: S \to \Dist{T}$, the map $\bind$
	combines them into a distribution over $T$, $\bind(\dist, f) \in \Dist{T}$,
	defined via
	\begin{align*}
		\bind(\dist, f)(t) \defeq \sum_{s \in S} \dist(s) \cdot f(s)(t).
	\end{align*}
\end{definition}
The full semantics is presented in \cref{fig:semantics}; we comment on a few
details here. First, given a state $\sigma$, we interpret expressions $e$ and
distribution expressions $d$ as values $\denot{e}_\sigma \in \mathcal{V}$ and
distributions over values $\denot{d}_\sigma \in \Dist{\mathcal{V}}$,
respectively; we implicitly assume that all expressions are well-typed. Second,
since the program semantics maps $\States$ to \emph{distributions} over $\States$,
the semantics for loops is only well-defined when the loop is \emph{almost
	surely terminating} (AST): from any initial state, the loop terminates with
probability $1$.
\fix{Since our data-driven procedure will require running probabilistic programs
on concrete inputs, \emph{we assume throughout that all loops are almost surely
terminating (AST)}. This condition can often be verified using existing methods~(e.g.,
	\citep{Chatterjee2016,Chatterjee:2016:AAQ:2837614.2837639,mciver2016new}).}

\begin{figure*}
	\begin{align*}
		\llbracket \Skip \rrbracket_{\sigma} &\defeq \unit(\sigma) \\
		\llbracket \Assn{x}{\expr} \rrbracket_{\sigma} &\defeq \unit(\sigma[x \mapsto \denot{e}_{\sigma}])\\
		\llbracket \Rand{x}{d} \rrbracket_{\sigma} &\defeq \bind(\llbracket d \rrbracket_{\sigma}, v \mapsto \unit(\sigma[x \mapsto v]) )\\
		\llbracket \Seq{P_1}{P_2} \rrbracket_{\sigma} &\defeq \bind(\llbracket P_1 \rrbracket_{\sigma}, \sigma' \mapsto \llbracket P_2 \rrbracket_{\sigma'}) \\
		\llbracket \Cond{e}{P_1}{P_2} \rrbracket_{\sigma} &\defeq
		\begin{cases}
			\llbracket P_1 \rrbracket_\sigma &: \denot{e}_\sigma =  \kwtrue \\
			\llbracket P_2 \rrbracket_\sigma &: \denot{e}_\sigma = \kwfalse
		\end{cases} \\
		\llbracket \while{e}{P} \rrbracket_{\sigma} &\defeq \lim_{n \to \infty} \llbracket (\Cond{e}{P}{\Skip})^n \rrbracket_{\sigma}
	\end{align*}
	\caption{Program semantics}
	\label{fig:semantics}
\end{figure*}

\paragraph*{Weakest Pre-expectation Calculus.}
Morgan and McIver's \emph{weakest pre-expectation calculus} reasons about
probabilistic programs by manipulating \emph{expectations}.
\begin{definition}
  \label{def:expectation}
	Denote the set of program states by $\States$.
	Define the set of expectations, $\Expt$, to be
	$\{ E \mid E : \States \to \mathbb{R}^{\infty}_{\geq 0}\}$.
	Define
	$E_1 \leq E_2 \quad\text{iff}\quad \forall \sigma \in \States: E_1(\sigma) \leq E_2(\sigma)$.
	The set $\Expt$ is a complete lattice.
\end{definition}
While expectations are technically mathematical functions from $\States$ to the
non-negative extended reals, for formal reasoning it is convenient to work with
a more restricted syntax of expectations~(see,
e.g.,~\cite{DBLP:journals/pacmpl/BatzKKM21}). We will often view numeric
expressions as expectations. Boolean expressions $b$ can also be converted
to expectations; we let $[b]$ be the expectation that maps states where $b$
holds to $1$, and other states to $0$. As an example of our notation, $[flip =
0] \cdot (x + 1)$, $x + 1$ are two expectations, and we have $[flip =
0] \cdot (x + 1) \leq x + 1$.

\begin{figure*}
	\begin{align*}
		\weakp(\Skip, E) &\defeq E\   \hspace{120pt} \weakp(\Assn{x}{\expr}, E) \defeq E[\expr/x] \\
		\weakp(\Rand{x}{d}, E) &\defeq \lambda \sigma. \sum_{v \in \mathcal{V}} \llbracket d \rrbracket_{\sigma} (v) \cdot E[v/x] (\sigma) \hspace{25pt}
		\weakp(\Seq{P}{Q}, E) \defeq \weakp(P, \weakp(Q,E))\\
		\weakp&(\Cond{e}{P}{Q}, E) \defeq [e] \cdot \weakp(P,E) + [\neg e] \cdot \weakp(Q, E) \\
		\weakp&(\while{e}{P}, E) \defeq \lfp(\lambda X.~[e] \cdot \weakp(P, X) + [\neg e] \cdot E)
	\end{align*}
	\caption{Morgan and McIver's weakest pre-expectation operator}
	\label{fig:wpe}
\end{figure*}

Now, we are ready to introduce Morgan and McIver's \textit{weakest
	pre-expectation transformer} $\weakp$. In a nutshell, this operator takes a
program $P$ and an expectation $E$ to another expectation $E'$, sometimes called
the \emph{pre-expectation}. Formally, $\weakp$ is defined in \cref{fig:wpe}. The
case for loops involves the least fixed-point ($\lfp$) of $\Phi_E^{\weakp} \defeq
\lambda X. ([e] \cdot \weakp(P, X) + [\neg e] \cdot E)$, the
\emph{characteristic function} of the loop with respect to
$\weakp$~\cite{KaminskiKMO16}. \fix{The characteristic function is
Scott-continuous on the complete lattice $\Expt$, so the least fixed-point
exists by the Kleene fixed-point theorem.}

The key property of the $\weakp$ transformer is that for any program $P$,
$\weakp(P,E)(\sigma)$ is the expected value of $E$ over the output distribution
$\llbracket P \rrbracket_{\sigma}$.
\begin{theorem}[See, e.g.,~\cite{KaminskiKMO16}]
	\label{psemantics}
	For any program $P$ and expectation $E \in \Expt$,
	$\weakp(P,E) = \lambda \sigma. \sum_{\sigma' \in \Sigma}  E(\sigma') \cdot \llbracket P \rrbracket_{\sigma}(\sigma')$
\end{theorem}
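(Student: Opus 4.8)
The plan is to prove \cref{psemantics} by structural induction on the program $P$, showing that the two expectations agree on every state $\sigma$. The base cases ($\Skip$, $\Assn{x}{e}$, $\Rand{x}{d}$) should follow by unfolding the definition of $\weakp$ from \cref{fig:wpe} and the operational semantics of $\denot{\cdot}$ given in \cref{app:prelim}: for $\Skip$ the output distribution is the Dirac distribution on $\sigma$, so both sides equal $E(\sigma)$; for assignment the output is the Dirac distribution on $\sigma[x \mapsto \denot{e}_\sigma]$, matching $E[e/x](\sigma)$; and for sampling the output distribution is $\sum_{v} \denot{d}_\sigma(v) \cdot \delta_{\sigma[x \mapsto v]}$, so the claimed sum collapses to $\sum_v \denot{d}_\sigma(v) \cdot E[v/x](\sigma)$.

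For the inductive cases, sequencing uses the fact that $\denot{\Seq{P}{Q}}_\sigma = \lambda \sigma''. \sum_{\sigma'} \denot{P}_\sigma(\sigma') \cdot \denot{Q}_{\sigma'}(\sigma'')$ (the Kleisli composition / monadic \bind), and then one rearranges the double sum and applies the induction hypothesis for $Q$ (to rewrite the inner sum as $\weakp(Q,E)(\sigma')$) followed by the induction hypothesis for $P$ (with post-expectation $\weakp(Q,E)$). The conditional case splits on whether $\denot{e}_\sigma$ is true or false; since $\denot{\Cond{e}{P}{Q}}_\sigma$ equals $\denot{P}_\sigma$ or $\denot{Q}_\sigma$ accordingly, the indicator weights $[e]$ and $[\neg e]$ pick out exactly the right branch, and the induction hypothesis finishes it.

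The main obstacle is the loop case, $P = \While{e}{P'}$, because here $\weakp$ is defined as a least fixed point rather than by a finite unfolding. I would handle this using Kleene iteration: let $\Phi = \Phi_E^{\weakp}$ be the characteristic function, so $\weakp(\While{e}{P'}, E) = \sup_{n} \Phi^n(\mathbf{0})$ by the Kleene fixed-point theorem (monotonicity of $\Phi$ on the complete lattice $\Expt$ was already noted in the excerpt). On the semantic side, the output distribution $\denot{\While{e}{P'}}_\sigma$ is likewise a supremum (pointwise limit) of the sub-distributions produced by the bounded unrollings of the loop --- intuitively, $\denot{\While{e}{P'}}$ is the least upper bound of $\denot{(\Cond{e}{P'}{\Skip})^n \mathbin{;} \mathbf{abort\text{-}if\text{-}}e}$. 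The key lemma is then that, for each $n$, $\Phi^n(\mathbf{0})(\sigma)$ equals $\sum_{\sigma'} E(\sigma') \cdot \mu_n^\sigma(\sigma')$ where $\mu_n^\sigma$ is the $n$-th bounded unrolling's output sub-distribution; this is itself a sub-induction on $n$ whose inductive step reuses the sequencing and conditional cases already established (since one unrolling is $\Cond{e}{\Seq{P'}{(\text{loop}_{n-1})}}{\Skip}$).

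Finally I would pass to the limit: since all quantities are non-negative and the sequences are monotone in $n$, the monotone convergence theorem lets me exchange the supremum over $n$ with the sum over $\sigma'$, giving $\sup_n \Phi^n(\mathbf{0})(\sigma) = \sum_{\sigma'} E(\sigma') \cdot \bigl(\sup_n \mu_n^\sigma\bigr)(\sigma') = \sum_{\sigma'} E(\sigma') \cdot \denot{\While{e}{P'}}_\sigma(\sigma')$, which is exactly the claim for loops. The subtle points to be careful about are that $\States$ may be countably infinite and $E$ may take the value $+\infty$, so I should work with sums over non-negative extended reals throughout and justify each reordering by Tonelli/monotone convergence rather than finite rearrangement; and I should make sure the chosen operational semantics of loops in \cref{app:prelim} is indeed the one whose bounded unrollings match $\Phi^n(\mathbf{0})$. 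Given that the theorem is cited as standard (e.g.,~\cite{KaminskiKMO16}), it would also be legitimate to simply cite it, but the structural-induction argument above is the self-contained route.
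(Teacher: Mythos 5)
The paper never actually proves this statement---it is imported from the literature (hence the ``See, e.g.,'' citation), and the appendix only proves \cref{theorem:equivalence}, which \emph{uses} it. So your proposal is not paralleling a proof in the paper; it is reconstructing the standard argument from the cited reference, and in outline it is the right one: the base cases, sequencing (Kleisli composition plus a Tonelli-justified exchange of sums), and the conditional case are exactly as you describe, and the loop case goes through a key lemma equating the iterates of $\Phi_E^{\weakp}$ with expected values over bounded unrollings, followed by monotone convergence in the limit.

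Two points need tightening before this is a complete proof. First, the step $\weakp(\While{e}{P'},E)=\sup_n \Phi^n(\mathbf{0})$ is not justified by monotonicity alone: monotonicity gives existence of the least fixed point (Knaster--Tarski), but identifying the lfp with the supremum of the Kleene iterates additionally requires $\omega$-continuity of $\Phi_E^{\weakp}$, i.e., continuity of $\weakp(P',\cdot)$ on increasing chains in $\Expt$. This is true and standard, but it must be proved (by monotone convergence, naturally folded into the same structural induction since $P'$ is a subprogram); as written you conflate the two fixed-point theorems, as does the paper's prose, which only needs existence. Second, the semantics in \cref{app:prelim} defines $\denot{\While{e}{P'}}_\sigma$ as the limit of the \emph{full} distributions $\denot{(\Cond{e}{P'}{\Skip})^n}_\sigma$, whereas your key lemma matches $\Phi^n(\mathbf{0})$ against the \emph{sub}-distributions that discard the mass still satisfying the guard after $n$ iterations. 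You flag this mismatch yourself; under the paper's blanket almost-sure-termination assumption the two limits coincide (the mass on guard-satisfying states vanishes), but that reconciliation has to be argued explicitly---without AST the paper's limit semantics need not even be well defined, while the sub-distribution formulation is the one for which the cited general theorem holds. With those two patches your route is sound and is essentially the proof found in the cited source.
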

Intuitively, the weakest pre-expectation calculus provides a syntactic way to compute
the expected value of an expression $E$ after running a program $P$, except when
the program is a loop. For a loop, the least fixed point definition of
$\weakp(\while{e}{P}, E)$ is hard to compute.

\section{Problem Statement} \label{sec:problem}


Analogous to when analyzing the weakest pre-conditions of a loop, knowing a
loop \emph{invariant} or \emph{sub-invariant} expectation helps one to
bound the loop's weakest pre-expectations, but a (sub)invariant
expectation can be difficult to find. Thus, we aim to develop an algorithm to
automatically synthesize invariants and sub-invariants of probabilistic loops.
More specifically, our algorithm tackles the following two problems:

\begin{enumerate}
	\item \textbf{Finding exact invariants: }
	Given a loop $\while{G}{P}$ and an expectation $\post$ as input,
	we want to find an expectation $I$ such that
	\begin{align}
		I = \Phi_{\post}^{\weakp}(I) \defeq [G] \cdot \weakp(P, I) + [\neg G] \cdot \post . \label{eq:invreq}
	\end{align}
  Such an expectation $I$ is an \emph{exact invariant} of the loop with respect
  to $\post$.
	\item \textbf{Finding sub-invariants: }
	Given a loop $\while{G}{P}$ and expectations $\pre, \post$, we aim to
	learn an expectation $I$ such that
	\begin{align}
		I &\leq \Phi_{\post}^{\weakp}(I) \defeq [G] \cdot \weakp(P, I) + [\neg G] \cdot \post \label{eq:subinvreq} \\
		\pre &\leq I \label{eq:preI}.
	\end{align}
  The first inequality~\cref{eq:subinvreq} says that $I$ is a sub-invariant: on states that satisfy
  $G$, the value of $I$ lower bounds the expected value of itself after running
  one loop iteration from initial state, and on states that violate $G$, the
  value of $I$ lower bounds the value of $\post$. The second inequality~\cref{eq:preI} says that $I$ is lower-bounded by the given expectation $\pre$.
\end{enumerate}

Note that an exact invariant is a sub-invariant, so one indirect way to solve
the second problem is to solve the first problem, and then check $\pre \leq I$.
However, we aim to find a more direct approach to solve the second problem
because often exact invariants can be complicated and hard to find, while
sub-invariants can be simpler and easier to find.

\fix{Once we find (sub)-invariants for a loop, we can use the (sub)-invariants to
derive provable bounds on the weakest pre-expectation of the loop if the
(sub)-invariants satisfy some additional conditions. Prior work has identified
various sufficient conditions; we use the conditions identified
by~\citet{hark2019aiming} because they are relatively easy to check.  We use the
following corollary of~\citet[Theorem 38]{hark2019aiming}.}

\begin{proposition}\label{theorem:equivalence}
  \fix{
    When $E$ and $I$ are both expectations,
  if in addition \emph{one of} (a), (b) or (c) holds:
  \begin{enumerate}[(a)]
    \item The number of iterations that  $\while{G}{P}$ runs is bounded, and $(\Phi_E^{\weakp})^n(I)$ is finite for every $n \in \mathbb{N}$.
    \item The following four conditions are all satisfied:
      \begin{itemize}
        \item The expected looping time of $\while{G}{P}$ is finite for every initial state $s \in \States$,
        \item $\Phi_E^{\weakp}(I)$ is finite.
        \item There exists an expectation $I'$ such that $I = [\neg e] \cdot E +
          [e] \cdot I$.
        \item The conditional difference of the invariant $I$, i.e., $\Delta I
          \defeq \lambda s. ([e] \cdot \weakp(P, |I - I(s)|))(s)$ is bounded by
          a constant.
      \end{itemize}
    \item $\while{G}{P}$ is almost surely terminating and
      both $I$ and $E$ are bounded.
  \end{enumerate}
   then we have:
	\begin{align}
    I \leq  \Phi_{E}^{\weakp}(I)  &\implies I \leq \weakp(\while{G}{P}, E) \label{eq:subinv}\\
    \text{ and } I = \Phi_{E}^{\weakp}(I) &\implies I = \weakp(\while{G}{P}, E) . \label{eq:equivinv}
	\end{align}
}
\end{proposition}
\begin{proof}
\fix{
  For~\cref{eq:subinv}, note that our conditions and the claim are exactly the
  same as those in~\citet[Theorem 38]{hark2019aiming}.
}

\fix{For~\cref{eq:equivinv}, the definition of the weakest pre-condition
  operator $\weakp(\while{G}{P}, E) = \lfp \Phi_{E}^{\weakp}$ and the Park
  induction principle $I \geq \Phi_{E}^{\weakp}(I) \implies I \geq \lfp
  \Phi_{E}^{\weakp}$~\citep{park1969fixpoint} gives:
  \begin{align}
    I \geq \Phi_{E}^{\weakp}(I) \implies I \geq \weakp(\while{G}{P}, E) . \label{eq:park}
  \end{align}
  Combining this implication~\cref{eq:park} with~\cref{eq:subinv}, we get~\cref{eq:equivinv}.
}
\end{proof}

\section{Algorithm}
\label{sec:algorithm}

\begin{figure}
	\centering
	\includegraphics[scale=0.4]{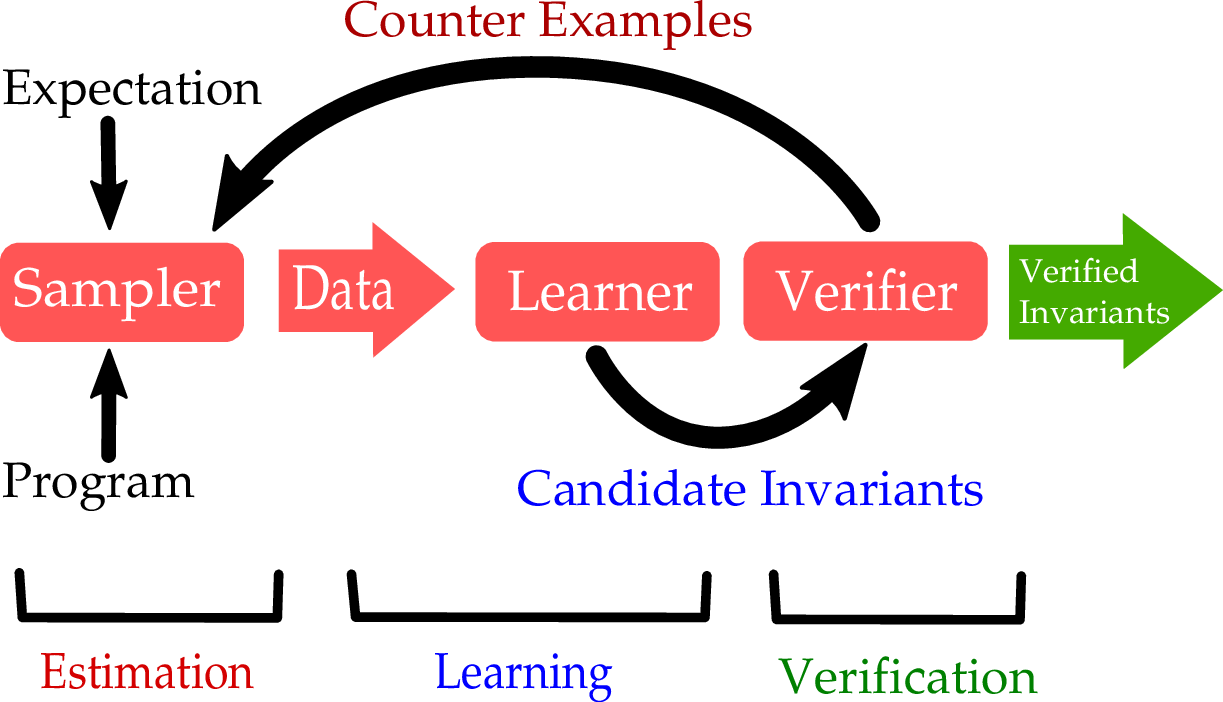}
	\caption{Overview of \tool}
	\label{fig:overview}
\end{figure}

We solve both problems with one algorithm, \tool (short for EXpectation
Invariant SynThesis). Our data-driven method resembles Counterexample Guided
Inductive Synthesis (CEGIS) (see~\cref{fig:overview}), but differs in two ways.
First, candidates are synthesized by fitting a machine learning model to data consisted of
program traces starting from random input states. Our target programs are also
probabilistic, introducing a second source of randomness to program traces.
Second, our approach seeks high-quality counterexamples---violating the target
constraints as much as possible---in order to improve synthesis. For
synthesizing invariants and sub-invariants, such counterexamples can be
generated by using a computer algebra system to solve an optimization problem.

In this section, we introduce a meta-algorithm to tackle both problems discussed in ~\cref{sec:problem}. We will see how to instantiate the meta-algorithm's subroutines in~\cref{sec:exactinv} and~\cref{sec:subinv}.

\begin{figure}
	\begin{align*}
		& \tool(\pprog, \pexp, \nruns, \nstates) \textbf{:} \\
		& \qquad \feat \gets \getfeatures(\pprog, \pexp) \\
		& \qquad \states \gets \getstates(\feat, \nstates) \\
		& \qquad \mathit{data} \gets \sample(\pprog, \pexp, \feat, \nruns, \states) \\
		& \qquad \textbf{while} \text{ not timed out} \textbf{:} \\
		& \qquad \qquad models \gets \learninv(\feat, \mathit{data})\\
		& \qquad \qquad \mathit{candidates} \gets \extractinv(models)\\
		& \qquad \qquad \textbf{for } inv \textbf{ in } \mathit{candidates} \textbf{:} \\
		& \qquad \qquad \qquad \mathit{verified}, \mathit{cex} \gets \verifyinv(inv, \pprog)\\
		& \qquad\qquad \qquad \textbf{if } \mathit{verified} \textbf{:}\\
		& \qquad\qquad \qquad\qquad \textbf{return } inv \\
		& \qquad\qquad \qquad \textbf{else:} \\
		& \qquad\qquad\qquad\qquad  \states \gets \states \cup cex \\
		& \qquad\qquad\qquad\qquad  \states \gets \states \cup \getstates(\feat, \nstates') \\
		& \qquad\qquad \mathit{data} \gets \mathit{data} \cup \sample(\pprog, \pexp, \feat, nruns, \states)
	\end{align*}
	\caption{Algorithm \tool \label{fig:cegl}}
\end{figure}

We present the pseudocode in~\cref{fig:cegl}. \tool takes a probabilistic
program $\pprog$, a post-expectation or a pair of pre/post-expectation
$\pexp$, and hyper-parameters $\nruns$ and $\nstates$. \tool \fix{starts} by
generating a list of features $\feat$, which are numerical expressions formed
by program variables used in $\pprog$. Next, \tool samples $\nstates$
initialization $\states$ and runs $\pprog$ from each of those states for
$\nruns$ trials, and records the value of $\feat$ on program traces as
$\mathit{data}$.  Then, \tool enters a CEGIS loop.  In each iteration of the
loop, first the learner $\learninv$ trains models to minimize their violation
of the required inequalities (e.g., ~\cref{eq:subinvreq} and~\cref{eq:preI}
for learning sub-invariants) on $\mathit{data}$.  Next, $\extractinv$
translates learned models into a set $\mathit{candidates}$ of expectations.
For each candidate $inv$, the verifier $\verifyinv$ looks for program states
that \emph{maximize} $inv$'s violation of required inequalities. If it cannot
find any program state where $inv$ violates the inequalities, the verifier
returns $inv$ as a valid invariant or sub-invariant. Otherwise, it produces a
set $\mathit{cex}$ of counter-example program states, which are added to the
set of initial states. Finally, before entering the next iteration, the
algorithm augments $\states$ with a new batch of $\nstates'$ initial states,
generates trace data from running $\pprog$ on each of these states for
$\nruns$ trials, and augments the dataset $\mathit{data}$.  This data
augmentation ensures that the synthesis algorithm collects more and more
initial states, some randomly generated ($\getstates$) and some from prior
counterexamples ($\mathit{cex}$), guiding the learner towards better
candidates. Like other CEGIS-based tools, our method is sound but not
complete, i.e., if the algorithm returns an expectation then it is guaranteed
to be an exact invariant or sub-invariant, but the algorithm might never
return an answer; in practice, we set a timeout.

\section{Learning Exact Invariants} \label{sec:exactinv}

In this section, we detail how we instantiate \tool's subroutines to learn
an exact invariant $I$ satisfying $I = \Phi_{\post}^{\weakp}(I)$, given a
loop $\pprog$ and an expectation $\pexp = \post$.

At a high level, we first sample a set of program states $\states$
using $\getstates$. From each program state $s \in \states$,
\sample executes $\pprog$ and estimates $\weakp(\pprog,
\post)(s)$. Next, \learninv trains regression models $M$ to predict
the estimated $\weakp(\prog, \post)(s)$ given the value of features
evaluated on $s$. Then, \extractinv translates the learned models $M$ to an
expectation $I$.  In an ideal scenario, this $I$ would be equal to
$\weakp(\prog, \post)$, which is also always an exact invariant.
But since $I$ is learned from stochastic data, it may be noisy. So, we use
\verifyinv to check whether $I$ satisfies the invariant condition $I =
\Phi_{\post}^{\weakp}(I)$.

The reader may wonder why we took this complicated approach, first
estimating the weakest pre-expectation of the loop, and then computing the
invariant: If we are able to learn an expression for $\weakp(\prog, \post)$
directly, then why are we interested in the invariant $I$? The answer is
that \fix{it is easier to verify if an $I$ is an invariant than to check whether
and $I$ is the least fixed point. Once we verify that $I$ is an invariant that
additionally satisfies conditions in~\cref{theorem:equivalence}, then we also know
that $I = \weakp(\prog, \post)$. }
Since our learning process is
inherently noisy, this verification step is crucial and motivates why we
want to find an invariant.


\begin{figure}
	\centering
	\begin{subfigure}[b]{0.3\textwidth}
		\begin{align*}
			&\while{x = 0}{} \\
			&\qquad \Assn{n}{n + 1}; \\
			&\qquad \Rand{x}{\mathbf{Bernoulli}(p)}
		\end{align*}
		\caption{Program: $\prog$}
		\label{fig:prog}
	\end{subfigure}
	\hfill
	\begin{subfigure}[b]{0.35\textwidth}
		\begin{center}
			\tikzset{every picture/.style={line width=0.75pt}} 

			\begin{tikzpicture}[x=0.75pt,y=0.75pt,yscale=-0.8,xscale=0.8]

				\draw    (180,50) -- (141.75,90.87) ;
				\draw [shift={(140.5,92.43)}, rotate = 308.63] [color={rgb, 255:red, 0; green, 0; blue, 0 }  ][line width=0.75]    (10.93,-3.29) .. controls (6.95,-1.4) and (3.31,-0.3) .. (0,0) .. controls (3.31,0.3) and (6.95,1.4) .. (10.93,3.29)   ;
				\draw    (200,50) -- (240,90.87) ;
				\draw [shift={(241.2,92.43)}, rotate = 230.67000000000002] [color={rgb, 255:red, 0; green, 0; blue, 0 }  ][line width=0.75]    (10.93,-3.29) .. controls (6.95,-1.4) and (3.31,-0.3) .. (0,0) .. controls (3.31,0.3) and (6.95,1.4) .. (10.93,3.29)   ;

				\draw (170,31.4) node [anchor=north west][inner sep=0.75pt]    {$x=0?$};
				\draw (120,54) node [anchor=north west][inner sep=0.75pt]    {$x \neq 0$};
				\draw (220,53) node [anchor=north west][inner sep=0.75pt]    {$x = 0$};
				\draw (130,99.4) node [anchor=north west][inner sep=0.75pt]    {$n$};
				\draw (215,95) node [anchor=north west][inner sep=0.75pt]    {$n+\frac{1}{p} $};
			\end{tikzpicture}
		\end{center}
		\caption{Model tree for $\weakp(\prog, n)$}
		\label{fig:wpe_model_tree}
	\end{subfigure}
	\hfill
	\begin{subfigure}[b]{0.3\textwidth}
		\begin{center}
			\tikzset{every picture/.style={line width=0.75pt}} 

			\begin{tikzpicture}[x=0.75pt,y=0.75pt,yscale=-0.8,xscale=0.8]

				\draw    (180,50) -- (141.75,90.87) ;
				\draw [shift={(140.5,92.43)}, rotate = 308.63] [color={rgb, 255:red, 0; green, 0; blue, 0 }  ][line width=0.75]    (10.93,-3.29) .. controls (6.95,-1.4) and (3.31,-0.3) .. (0,0) .. controls (3.31,0.3) and (6.95,1.4) .. (10.93,3.29)   ;
				\draw    (200,50) -- (240,90.87) ;
				\draw [shift={(241.2,92.43)}, rotate = 230.67000000000002] [color={rgb, 255:red, 0; green, 0; blue, 0 }  ][line width=0.75]    (10.93,-3.29) .. controls (6.95,-1.4) and (3.31,-0.3) .. (0,0) .. controls (3.31,0.3) and (6.95,1.4) .. (10.93,3.29)   ;

				\draw (170,31.4) node [anchor=north west][inner sep=0.75pt]    {$x=0?$};
				\draw (120,54) node [anchor=north west][inner sep=0.75pt]    {$x \neq 0$};
				\draw (220,53) node [anchor=north west][inner sep=0.75pt]    {$x = 0$};
				\draw (130,99.4) node [anchor=north west][inner sep=0.75pt]    {$n$};
				\draw (210,95) node [anchor=north west][inner sep=0.75pt]    {$n+0.95\frac{1}{p} $};
			\end{tikzpicture}
		\end{center}
		\caption{Another model tree}
		\label{fig:model_tree}
	\end{subfigure}
	\caption{Running example: Program and model tree}
	\label{fig:example}
\end{figure}

\paragraph{A running example.}
We will illustrate our approach using~\cref{fig:example}. The simple program
\prog repeatedly loops: whenever $x$ becomes non-zero we exit the loop;
otherwise we increase $n$ by $1$ and draw $x$ from a biased coin-flip
distribution ($x$ gets $1$ with probability $p$, and $0$ otherwise). We aim to
learn $\weakp(\prog, n)$, which is $[x \neq 0] \cdot n + [x = 0] \cdot (n + \frac{1}{p})$.

\paragraph*{Our Regression Model.}

Before getting into how \tool collects data and trains models, we introduce the
class of regression models it uses -- \emph{model trees}, a generalization of
decision trees to regression tasks~\citep{quinlan1992learning}. Model trees are
naturally suited to expressing piecewise functions of inputs, and are
straightforward to train. While our method can in theory generalize to other
regression models, our implementation focuses on model trees.

More formally, a model tree $\Tree \in \mathcal{T}$ over
features $\mathcal{F}$ is a full binary tree where each internal node is
labeled with a predicate $\phi$ over variables from $\mathcal{F}$, and each
leaf is labeled with a real-valued model $M \in \mathcal{M} :
\mathbb{R}^\mathcal{F} \to \mathbb{R}$.
Given a feature vector in $x \in \mathbb{R}^\mathcal{F}$, a model tree $\Tree$
over $\mathcal{F}$ produces a
numerical output $\Tree(x) \in \mathbb{R}$ as follows:
\begin{itemize}
	\item If $\Tree$ is of the form $\Leaf(M)$, then $\Tree(x) \defeq M(x)$.
	\item If $\Tree$ is of the form $\Node(\phi, \Tree_L, \Tree_R)$, then
	$\Tree(x) \defeq \Tree_R(x)$ if the predicate $\phi$ evaluates to true on $x$,
	and $\Tree(x) \defeq \Tree_L(x)$ otherwise.
\end{itemize}

Throughout this paper, we consider model trees of the following form as our
regression model. First, node predicates $\phi$ are of the form $f \bowtie c$,
where $f \in \mathcal{F}$ is a feature, ${\bowtie} \in {\{ < , \leq, =, >, \geq
	\}}$ is a comparison, and $c$ is a numeric constant. Second, leaf models on a
model tree are either all \emph{linear models} or all products of constant
powers of features, which we call \emph{multiplication models}.  For example,
assuming $n, \frac{1}{p}$ are both features,~\cref{fig:wpe_model_tree} and
~\cref{fig:model_tree} are two model trees with linear leaf models,
and~\cref{fig:wpe_model_tree} expresses the weakest pre-expectation
$\weakp(\prog, n)$.  Formally, the leaf model $M$ on a feature vector $f$ is
either
\[
M_l(f) = \sum_{i = 1}^{|\mathcal{F}|} \alpha_i \cdot f_i
\qquad\text{ or }\qquad
M_m(f)	= \prod_{i = 1}^{|\mathcal{F}|} f_i^{\alpha_i}
\]
with constants $\{ \alpha_i \}_i$.
Note that multiplication models can also be viewed as linear models on
logarithmic values of features because $\log M_m(f) = \sum_{i = 1}^{|\mathcal{F}|} \alpha_i
\cdot \log(f_i)$.
While it is also straightforward to adapt our method to other leaf models, we
focus on linear models and multiplication models because of their simplicity and
expressiveness. Linear models and multiplication models also complement
each other in their expressiveness: encoding expressions like $x + y$ uses
simpler features with linear models (it suffices if $\mathcal{F} \ni x, y$, as
opposed to needing $\mathcal{F} \ni x + y$ if using multiplicative models),
while encoding $\frac{p}{1-p}$ uses simpler features with multiplicative models
(it suffices if $\mathcal{F} \ni p, 1-p$, as opposed to needing $\mathcal{F} \ni
\frac{p}{1-p}$ if using linear models).


\subsection{Generate Features (\getfeatures)}
\label{sec:inv:getfeatures}
Given a program, the algorithm first generates a set of features $\mathcal{F}$
that model trees can use to express unknown invariants of the given loop.  For
example, for \prog, $I = [x \neq 0] \cdot n + [x = 0] \cdot (n+ \frac{1}{p})$
is an invariant, and to have a model tree (with linear/multiplication leaf
models) express $I$, we want $\mathcal{F}$ to include both $n$ and
$\frac{1}{p}$, or $n + \frac{1}{p}$ as one feature.  $\mathcal{F}$ should
include the program variables at a minimum, but it is often useful to have
more complex features too.  While generating more features increases the
expressivity of the models, and richness of the invariants, there is a cost:
the more features in $\mathcal{F}$, the more data is needed to train a model.

Starting from the program variables, \getfeatures generates two lists of
features, $\mathcal{F}_{l}$ for linear leaf models and $\mathcal{F}_{m}$ for
multiplication leaf models. Intuitively, linear models are more expressive if
the feature set $\mathcal{F}$ includes some products of terms, e.g., $n \cdot
p^{-1}$,  and multiplication models are more expressive if $\mathcal{F}$
includes some sums of terms, e.g., $n + 1$.

We assume program variables and optional user-supplied features $\mathit{opt}$ are typed as probabilities (denoted using $p_i$), integers (denoted using $n_i$), booleans (denoted using $b_i$), or reals (denoted using $x_i$).
\fix{In general, we do not restrict the integers $n_i$ and reals $x_i$ to be non-negative as our learning algorithm does not assume non-negativity; later, though, to ensure that what \tool generates are indeed expectations according to~\cref{def:expectation}, 
we assume variables in specific programs to be non-negative. }
Then, given program variables and
user-supplied features $p_i, \dots, n_i, \dots, b_i, \dots, x_i, \dots$,
a loop with guard $G$, and post expectation $\mathit{pexp}$,
\getfeatures generates
\begin{align*}
	\mathcal{F}_l &
	\ni G \mid pexp
	\mid p_i \mid n_i \mid b_i \mid x_i
	\mid p_i \cdot p_j
	\mid n_i \cdot n_j
	\mid x_i \cdot x_j
	\mid n_i \cdot x_j
	\mid b_i \cdot b_j \\
	\mathcal{F}_m &
	\ni G \mid pexp
	\mid p_i \mid n_i \mid b_i \mid x_i
	\mid 1+ p_j
	\mid 1 - p_j
	\mid p_i + p_j
	\mid p_i + p_j - (p_i \cdot p_j) \\
	& \qquad \mid n_i + n_j
	\mid n_i - n_j
	\mid x_i + x_j
	\mid x_i - x_j
	\mid n_i + x_j
	\mid n_i - x_j
	\mid b_i + b_j
	\mid b_i - b_j.
\end{align*}

\subsection{Sample Initial States (\getstates)}
Recall that \tool aims to learn an expectation $I$ that is equal to the
weakest pre-expectation $\weakp(\while{G}{P}, \post)$.  A natural idea for
\sample is to run the program from all possible initializations multiple
times, and record the average value of $\post$ from each initialization. This
would give a map close to 	$\weakp(\while{G}{P}, \post)$ if we run enough
trials so that the empirical mean is approximately the actual mean. However,
this strategy is clearly impractical---many of the programs we consider have
infinitely many possible initial states (e.g., programs with integer
variables). Thus, \getstates needs to choose a manageable number of initial
states for \sample to use.

In principle, a good choice of initializations should exercise as many parts
of the program as possible. For instance, for \prog in~\cref{fig:example}, if
we only try initial states satisfying $x \neq 0$, then it is impossible to learn
the term $[x = 0] \cdot (n + \frac{1}{p})$ in $\weakp(\prog, n)$
from data. However, covering the control flow graph may not be enough. Ideally,
to learn how the expected value of $\post$ depends on the
initial state, we also want data from multiple initial states along
each path.

While it is unclear how to choose initializations to ensure optimal coverage,
our implementation uses a simpler strategy: \getstates generates $\nstates$
states in total, each by sampling the value of every program variable
uniformly at random from a space. We assume program variables are typed as
booleans, integers, probabilities, or floating point numbers and sample
variables of some type from the corresponding space. For boolean variables,
the sampling space is simply $\{0, 1\}$; for probability variables, the space
includes reals in some interval bounded away from $0$ and $1$, because
probabilities too close to 0 or 1 tend to increase the variance of programs
(e.g., making some loops iterate for a very long time); for floating point
number and integer variables, the spaces are respectively reals and integers
in some bounded range. This strategy, while simple, is already very effective
in nearly all of our benchmarks (see~\cref{sec:eval}), though other strategies
are certainly possible (e.g., performing a grid search of initial states from
some space).

\subsection{Sample Training Data (\sample)}

We gather training data by running the given program $\pprog$ on the set of
initializations generated by \getstates.
From each program state $s \in \states$, the subroutine
\sample runs $\prog$  for $\nruns$ times to get output states
$\{ s_1, \dots, s_{\nruns}  \}$	and
produces the following \fix{training example $(s, v)$, where
\begin{align*}
  v 	=	 \frac{1}{\nruns} \sum_{i = 1}^{\nruns} \post(s_i).
\end{align*}
}
%
Thus, the value $v$ is the empirical mean of $\post$ in the output state of
running $\prog$ from initial state $s_i$; as $\nruns$ grows
large, this average value approaches the true expected value
$\weakp(\prog, \post)(s)$.

\subsection{Learning a Model Tree (\learninv)}
Now that we have the training set $\mathit{data} = \{ (s_1, v_1),
\dots, (s_K, v_K) \}$ (where $K = \nstates$), we want to fit a model
tree $T$ to the data. We aim to apply off-the-shelf tools
that can learn model trees with customizable leaf models and loss.  For each
data entry, $v_i$ approximates $\weakp(\prog, \post)(s_i)$, so a
natural idea is to train a model tree $T$ that takes the value of features
on $s_i$ as input and predicts $v_i$. To achieve that,
we want to define the loss to measure
the error between predicted values $T(\mathcal{F}_l(s_i))$ (or
$T(\mathcal{F}_m(s_i))$) and the target value $v_i$.  Without loss of
generality, we can assume our invariant $I$ is of the form
\begin{align}
	I = \post + [G] \cdot I' \label{eq:template}
\end{align}
because $I$ being an invariant means
\begin{align*}
	I &= [\neg G] \cdot \post + [G] \cdot \weakp(P, I)
	= \post + [G] \cdot ( \weakp(P, I) - \post).
\end{align*}
In many cases, the expectation $I' = \weakp(P, I) - \post$ is simpler than $I$:
for example, the weakest pre-expectation of \prog can be expressed as $n + [x =
0] \cdot ( \frac{1}{p}) $; while $I$ is represented by a tree that splits on
the predicate $[x = 0]$ and needs both $n, \frac{1}{p}$ as features, the
expectation $I' = \frac{1}{p}$ is represented by a single leaf model tree that
only needs $p$ as a feature. \fix{Also, since we are rewriting $I$ in terms of
  $I'$ and $\textsf{post}$ only for the convenience of model-fitting, the $I'$
  here does not have to be an expectation, i.e., it could map states to
negative values. }

Aiming to learn weakest pre-expectations $I$ in the form of~\cref{eq:template},
\tool trains model trees $T$ to fit $I'$. More precisely,
\learninv trains a model tree $T_l$ with linear leaf models over features $\mathcal{F}_l$ by minimizing the loss
\begin{align}
	err_l(T_l, \mathit{data}) = \left( \sum_{i = 1}^{K}  \left(\post(s_i) + G(s_i) \cdot T_l(\mathcal{F}_l(s_i)) - v_i \right)^2 \right)^{1/2},
	\label{eq:errl}
\end{align}
where $\post(s_i)$ and $G(s_i)$ represents the value of expectation $\post$
and $G$ evaluated on the state $s_i$.  This loss measures the sum error
between the prediction $\post(s_i) + G(s_i) \cdot T_l(\mathcal{F}_l(s_i))$
and target $v_i$. Note that when the guard $G$ is false on an initial state
$s_i$, the example contributes zero to the loss because $\post(s_i) +
G(s_i) \cdot T_l(\mathcal{F}_l(s_i)) = \post(s_i) = v_i$; thus, we only need
to generate and collect trace data for initial states where the guard $G$ is
true.

Analogously,
\learninv trains a model tree $T_m$ with
multiplication leaf models  over features $\mathcal{F}_m$
to minimize the loss $err_m(T_m, data)$,
which is the same as $err_l(T_l, data)$ except $T_l(\mathcal{F}_l(s_i))$
is replaced by $T_m(\mathcal{F}_m(s_i))$ for each $i$.

	\subsection{Extracting Expectations from Models (\extractinv)}

	Given the learned model trees $T_l$ and $T_m$, we extract expectations that approximate $\weakp(\prog, \mathit{\post})$ in three steps:
	\begin{enumerate}
		\item \textbf{Round $T_l$, $T_m$ with different precisions.}
		Since we obtain the model trees $T_l$ and $T_m$ by learning and the training data is stochastic, the coefficients of features in $T_l$ and $T_m$ may be slightly off, so we apply several rounding schemes to generate a list of rounded model trees.

		For $T_l$, the learned model tree with linear leaf models, we round its coefficients to integers, one digit, and two digits respectively and
		get $T_{l0}, T_{l1},T_{l2}$. For instance, rounding the model tree depicted
		in~\cref{fig:model_tree} to integers gives us the model tree in~\cref{fig:wpe_model_tree}.
		For $T_m$, the learned model tree with multiplication leaf models
		, we construct $T_{m0}, T_{m1}, T_{m2}$ by rounding the leading constant
		coefficient to respective digits and all other exponentiating coefficients
		to integers: $c \cdot \prod_{i = 1}^{|\mathcal{F}|} x_i^{a_i}$
		gets rounded to $round(c, \mathit{digit}) \cdot \prod_{i = 1}^{|\mathcal{F}|} x_i^{\text{int}(a_i)}$.
		\item \textbf{Translate into expectations.} Since we learn model trees, this
		step is straightforward: for example, $n + \frac{1}{p}$ can be seen as a
		model tree (with only a leaf) mapping the values of features $n,\frac{1}{p}$
		to a number, or an expectation mapping program states where $n, p$ are
		program variables to a number.
		We translate each model tree obtained from the previous step to an expectation.
		\item \textbf{Form the candidate invariant.} Since we train the model trees to
		fit $I'$ so that $\post + [G] \cdot I'$ approximates $\weakp(\while{G}{P},
		\post)$, we construct each candidate invariant $inv \in \mathit{invs}$ by
		replacing $I'$ in the pattern $\post + [G] \cdot I'$ by an expectation
    obtained in the second step.
	\end{enumerate}

	\subsection{Verify Extracted Expectations (\verifyinv)}
	Recall that $\prog$ is a loop $\while{G}{P}$,
	and given a set of candidate invariants $\mathit{invs}$,
	we want to check if any $\mathit{inv} \in \mathit{invs}$ is a loop invariant,
	i.e., if $\mathit{inv}$ satisfies
	\begin{align}
		\mathit{inv} =  [\neg G] \cdot \post + [G] \cdot \weakp(P, \mathit{inv}).
		\label{eq:req:exactinv}
	\end{align}
	Since the learned model might not predict the expected value for every data
	point exactly, we must verify whether $inv$ satisfies this equality using
	$\verifyinv$.

\subsection{Search for \textit{worst-case} counterexamples}
If the invariant is not verified, \verifyinv has to look for counterexamples that violate the conditions for valid invariants. These counterexamples are fed back to augment the dataset to (hopefully) learn a better invariant.

    In this process, we hope that our data augmentation leads to \textit{substantial} loss in the learning algorithm so as to steer the learning to a new invariant. We provide an example on the challenges of such an endeavor. Though our algorithm uses regression, let us use classification for ease of discussion and visualization. Figure~\ref{fig:worsta} shows the case of learning a linear classifier: we are attempting to learn a classifier that separates the red and gray regions; the stars and circles are the data-points corresponding to the red and grey regions, respectively. Given the current dataset, we may learn an (incorrect) classifier, as shown by the dotted line.

    Figure~\ref{fig:worstb} shows the case where a new counterexample is found (counterexamples are shown with green boundary), that is used to augment the dataset. As the counterexample is quite close to the current classifier's decision boundary, it may not create enough loss to bulge the decision boundary. There are two possible solutions to this problem:

    \begin{itemize}
        \item \textbf{Engineer a classifier loss function that is more sensitive to violations. } We may engineer the loss function to penalize violations heavily. However, as our dataset is generated from estimates from a finite set of observations, there exist some amount of noise in the estimates. As our dataset is itself noisy, this option is not desirable as we would like our learnt classifier to be robust to noise. Further, such loss functions that are too sensitive to small violations can lead to instability in the learning process, making it difficult to converge.
        \item \textbf{Improve data augmentation. } Instead, we attempt to improve our data augmentation process in two ways: firstly, to attempt generate \textit{better} datapoints for augmentation by searching for \textit{worst-case} counterexamples, i.e. counterexamples that generate a highest possible loss on the objective function used to train the regression model (see Figure~\ref{fig:worstc}). Secondly, we generate a set of counterexamples so that the \textit{cumulative} loss is high enough to ensure progress towards the desirable invariant (see Figure~\ref{fig:worstd}).
    \end{itemize}

Hence, instead of searching for any counterexample that causes a violation, \verifyinv searches for a set of counterexamples that maximizes the
	violation in order to drive the learning process forward in the next iteration.
	Formally, for every $\mathit{inv} \in \mathit{invs}$, \verifyinv queries
	computer algebra systems to find a set of program states $S$ such that $S$
	includes states maximizing the absolute difference of two sides
	in~\cref{eq:req:exactinv}:
	\begin{align*}
		S \ni \mathbf{argmax}_s |\mathit{inv}(s) - \left( [\neg G] \cdot \post + [G] \cdot wp(P, \mathit{inv}) \right) (s) |.
	\end{align*}
	If there are no program state where the absolute difference is non-zero,
	\verifyinv returns $\mathit{inv}$ as a true invariant. Otherwise, the maximizing
	states in $S$ are added to the list of counterexamples $cex$; if no candidate in
	$\mathit{invs}$ is verified, \verifyinv returns \textsf{False} and the
	accumulated list of counterexamples $cex$. The next iteration of the CEGIS loop
	will sample program traces starting from these counterexample initial states,
	hopefully leading to a learned model with less error.

	\begin{figure}[t]
		\centering

		\begin{subfigure}[b]{0.4\textwidth}
			\includegraphics[width=\textwidth]{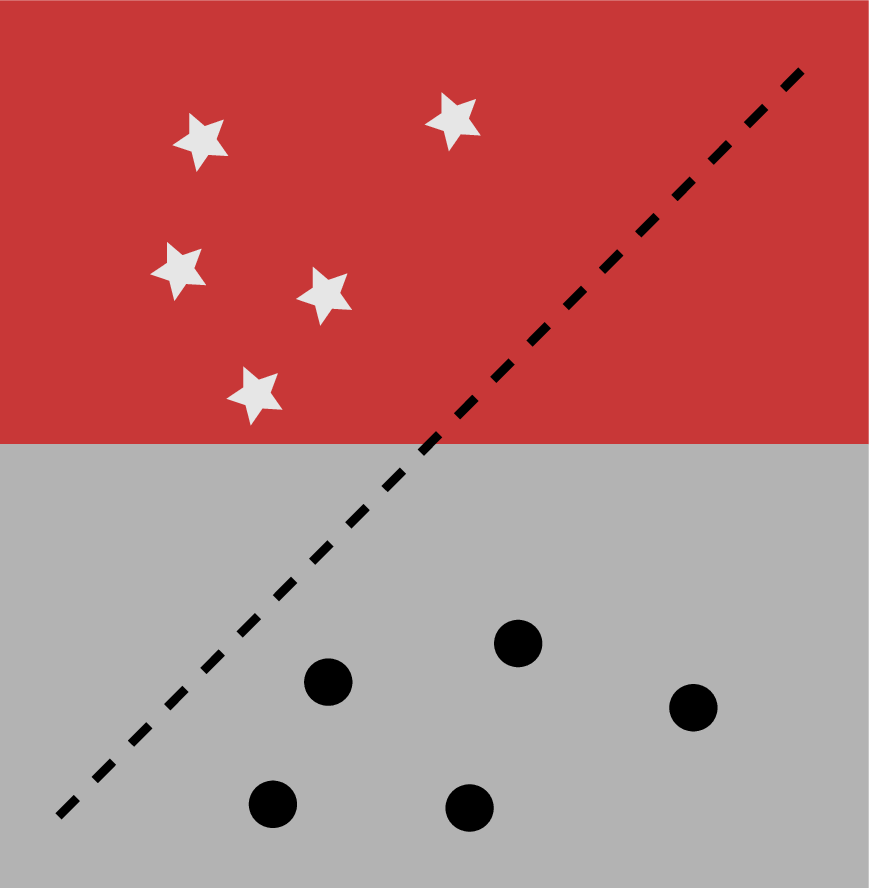}
			\caption{}
			\label{fig:worsta}
		\end{subfigure}
		\begin{subfigure}[b]{0.4\textwidth}
			\includegraphics[width=\textwidth]{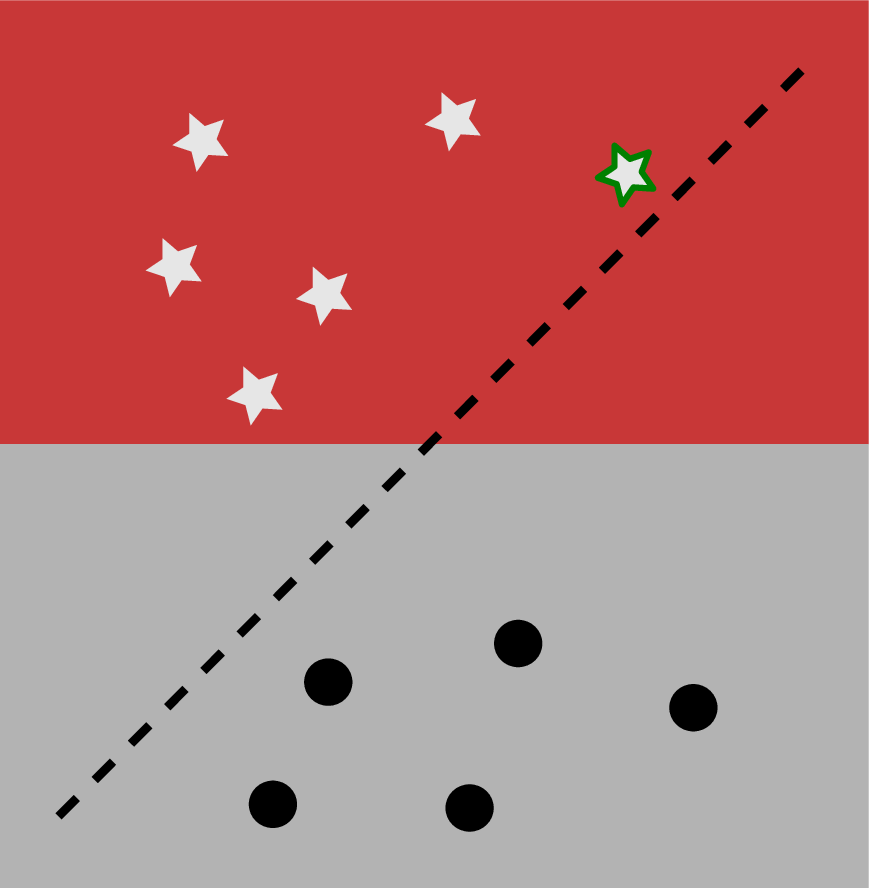}
			\caption{}
			\label{fig:worstb}
		\end{subfigure}

		\begin{subfigure}[b]{0.4\textwidth}
			\includegraphics[width=\textwidth]{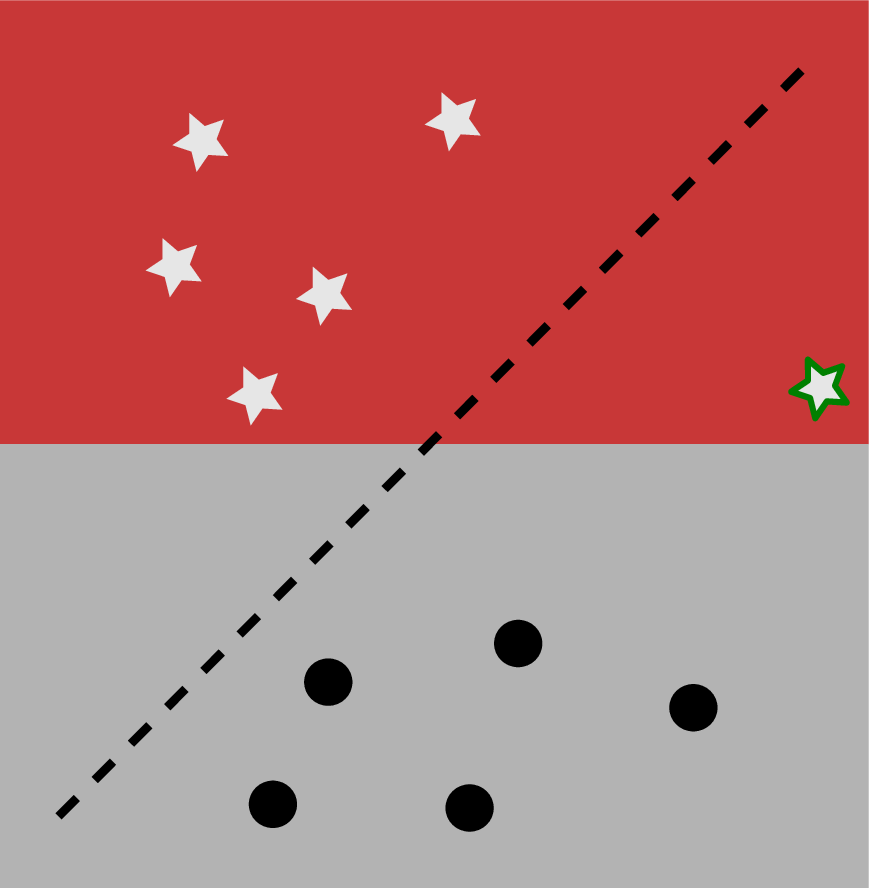}
			\caption{}
			\label{fig:worstc}
		\end{subfigure}
		\begin{subfigure}[b]{0.4\textwidth}
			\includegraphics[width=\textwidth]{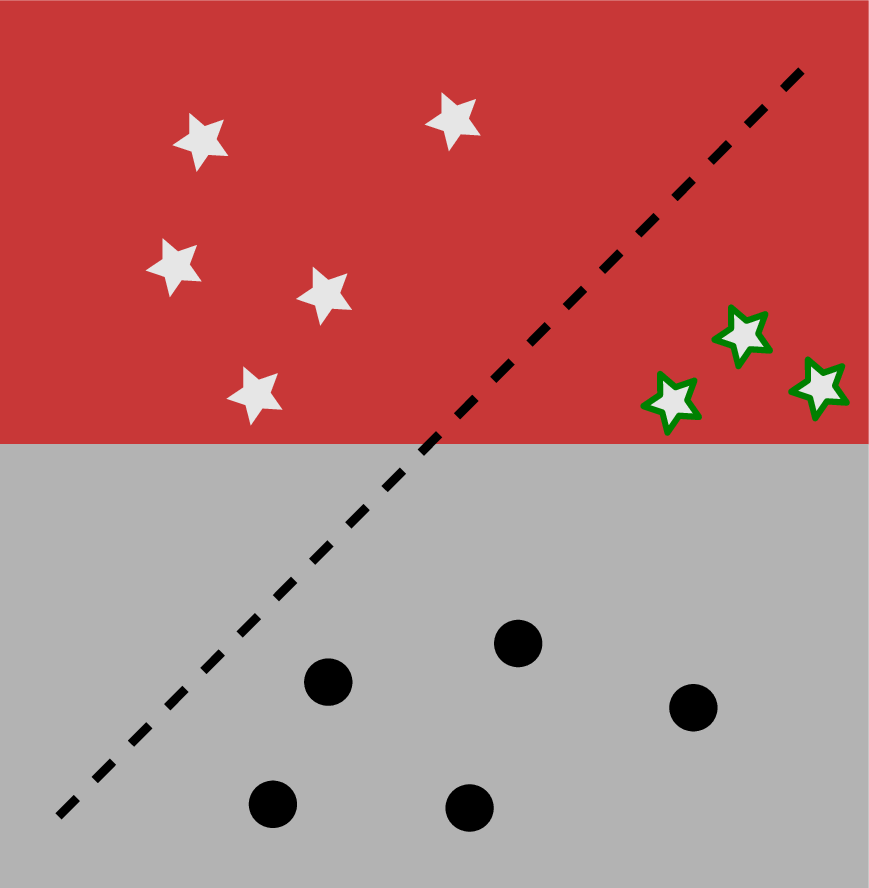}
			\caption{}
			\label{fig:worstd}
		\end{subfigure}

		\caption{Improved data augmentation}
	\end{figure}

	\section{Learning Sub-invariants} \label{sec:subinv}

	Next, we instantiate \tool for our second problem: learning
	sub-invariants.
	Given a program $\pprog = \while{G}{P} $ and
	a pair of pre- and post- expectations $(\pre, \post)$, we want to find a
	expectation $I$ such that $\pre \leq I$, and
	\[
	I \leq \Phi_{\post}^{\weakp} (I) := [\neg G] \cdot \post + [G] \cdot \weakp(P, I)
	\]
	Intuitively, $\Phi_{\post}^{\weakp}(I)$ computes the expected value of the
	expectation $I$ after one iteration of the loop.  We want to train a model $M$
	such that $M$ translates to an expectation $I$ whose expected value increases
	each iteration, and $\pre \leq I$.

	The high-level plan is the same as for learning exact invariants:
	we train a model to minimize a loss defined to capture
	the sub-invariant requirements.  We generate features $\mathcal{F}$ and sample
	initializations $\states$ as before. Then, from each $s \in \states$, we
	repeatedly run just the loop body $P$ and record the set of output states in
	$data$; this departs from our method for exact invariants, which repeatedly
	runs the entire loop to completion. Given this trace data, for any program
	state $s \in \states$ and expectation $I$, we can compute the empirical mean of
	$I$'s value after running the loop body $P$ on state $s$. Thus, we can
	approximate $\weakp(P, I)(s)$ for $s \in \states$ and use this estimate to
	approximate $\Phi_{\post}^{\weakp}(I)(s)$.  We then define a loss to sum up the
	violation of $I \leq \Phi_{\post}^{\weakp}(I)$ and $\pre \leq I$ on state $s \in
	\states$, estimated based on the collected data.

	The main challenge for our approach is that existing model tree learning
	algorithms do not support our loss function. Roughly speaking, model tree
	learners typically assume a node's two child subtrees can be learned separately;
	this is the case when optimizing on the loss we used for exact invariants, but
	this is \emph{not} the case for the loss for sub-invariants.

	To solve this challenge, we first broaden the class of models to neural
	networks.  To produce sub-invariants that can be verified, we still want to learn
	simple classes of models, such as piecewise functions of numerical expressions.
	Accordingly, we work with a class of neural architectures that can be translated
	into model trees, \emph{neural model trees}, adapted from neural decision trees
	developed by~\citet{DBLP:journals/corr/abs-1806-06988}. We defer the technical
	details of neural model trees to~\cref{subsec:dndtconstruction};
	for now, we can
	treat them as differentiable approximations of standard model trees; since they
	are differentiable they can be learned with gradient descent, which can optimize on
	the sub-invariant loss function.

	\paragraph{Outline.}
	We will discuss changes in \sample, \learninv and \verifyinv
	for learning sub-invariants but omit descriptions of
	\getfeatures, \getstates, \extractinv because
	\tool generates features, samples initial states
	and extracts expectations in the same way as in~\cref{sec:exactinv}.
	To simplify the exposition, we will assume \getfeatures generates the same set
	of features  $\mathcal{F} = \mathcal{F}_l = \mathcal{F}_m$ for
	model trees with linear models and model trees with multiplication models.

	\subsection{Sample Training Data (\sample)}
	Unlike when sampling data for learning exact invariants, here, \sample runs only
	one iteration of the given program $\pprog = \while{G}{P}$, that is, just $P$,
	instead of running the whole loop. Intuitively, this difference in data
	collection is because we aim to directly handle the sub-invariant condition,
	which encodes a single iteration of the loop. For exact invariants, our approach
	proceeded indirectly by learning the expected value of $\post$ after running the
	loop to termination.

	From any initialization $s_i \in \states$ such that $G$ holds on $s_i$,
	$\sample$ runs the loop body $P$ for $\nruns$ trials, each time restarting from
	$s_i$, and records the set of output states reached. If executing $P$ from $s_i$
	leads to output states $\{s_{i1}, \dots, s_{i\nruns}\}$, then \sample produces
	the training example:
	\begin{align*}
		(s_i, S_i)= \left(s_i, \left\{s_{i1}, \dots, s_{i\nruns} \right\} \right),
	\end{align*}
	For initialization $s_i \in \states$ such that $G$ is false on $s_i$,
	\sample simply produces $(s_i, S_i)= \left(s_i, \emptyset\right)$ since the loop
	body is not executed.

	\subsection{Learning a Neural Model Tree (\learninv)}
	\label{sec:sub:learninv}
	Given the dataset $data = \{ (s_1, S_1), \dots, (s_{K}, S_{K}) \}$ (with $K =
	\nstates$),
	we want to learn an expectation $I$ such that $\pre \leq I$
	and $I \leq \Phi^{\weakp}_{\post}(I)$.
	By case analysis on the guard $G$,
	the requirement $I \leq \Phi^{\weakp}_{\post}(I)$ can be split into two
	constraints:
	\[
	[G] \cdot I \leq [G] \cdot \weakp(P, I)
	\qquad\text{and}\qquad
	[\neg G] \cdot I \leq [\neg G] \cdot \post .
	\]
	If $I = \post + [G] \cdot I' $, then the second requirement
	reduces to $[\neg G] \cdot post E \leq [\neg G] \cdot post E$
	and is always satisfied.
	So to simplify the loss and training process,
	we again aim to learn an expectation $I$ of the
	form of $\post + [G] \cdot I'$.
	Thus, we want to train a model tree $T$ such that $T$
	translates into an expectation $I'$, and
	\begin{align}
		\pre &\leq \post + [G] \cdot I' \label{eq:sub:pre} \\
		[G] \cdot (\post + [G] \cdot I') 					&\leq [G] \cdot \weakp(P, \post + [G] \cdot I') \label{eq:sub:G}
	\end{align}
	Then, we define the loss of model tree $T$ on $data$ to be
	\begin{align*}
		err(T, data) \defeq err_1(T, data) +  err_2(T, data),
	\end{align*}
	where $err_1(T, data)$ captures~\cref{eq:sub:pre} and
	$err_2(T, data)$ captures~\cref{eq:sub:G}.

	Defining $err_1$ is relatively simple: we sum up the one-sided difference
	between $\pre(s)$ and $\post(s) + G(s) \cdot T(\mathcal{F}(s))$ across $s \in
	\mathit{states}$, where $T$ is the model tree getting trained and
	$\mathcal{F}(s)$ is the feature vector $\mathcal{F}$ evaluated on $s$.  That is,
	\begin{align}
		err_1 (T, data) := \sum_{i = 1}^{K} \max \left(0, \pre(s_i) - \post(s_i) - G(s_i) \cdot T(\mathcal{F}(s_i)) \right).
		\label{eq:err1}
	\end{align}
	Above, $\pre(s_i)$, $\post(s_i)$, and $G(s_i)$ are the value of expectations
	$\pre$, $\post$, and $G$ evaluated on program state $s_i$.

	The term $err_2$ is more involved.
	Similar to $err_1$, we aim to sum up the one-sided difference between
	two sides of~\cref{eq:sub:G} across state $s \in \states$.
	On program state $s$ that does not satisfy $G$, both sides are 0;
	for $s$ that satisfies $G$, we want to evaluate $\weakp(P, \post + [G] \cdot I')$
	on $s$, but we do not have exact access to $\weakp(P, \post + [G] \cdot I')$
	and need to approximate its value on $s$ based on sampled program traces.
	Recall that $\weakp(P, I)(s)$ is the
	expected value of $I$ after running program $P$ from $s$,
	and our dataset contains training examples $(s_i, S_i)$
	where $S_i$ is a set of states reached after running $P$ on an initial state
	$s_i$ satisfying $G$.
	Thus, we can approximate $[G] \cdot \weakp(P, \post + G \cdot I')(s_i)$ by
	\[
	G(s_i) \cdot \frac{1}{|S_i|} \cdot \sum_{s \in S_i} \left( \post(s) + G(s) \cdot I'(s) \right).
	\]
	To avoid division by zero when $s_i$ does not satisfy $G$ and $S_i$ is empty,
	we evaluate the expression in a short-circuit manner such that when $G(s_i) = 0$,
	the whole expression is immediately evaluated to zero.

	Therefore, we define
	\begin{align*}
		err_2 (T, data) = \sum_{i = 1}^{K}  \max\bigg( &0, G(s_{i}) \cdot \post(s_{i}) + G(s_{i}) \cdot  T(\mathcal{F}(s_{i}))   \\
		& -  G(s_{i}) \cdot \frac{1}{|S_i|} \cdot \sum_{s \in S_i} \big(\post(s) + G(s) \cdot T(\mathcal{F}(s)) \big) \bigg).
	\end{align*}
	Standard model tree learning algorithms do not support this kind of loss
	function, and since our overall loss $err(T, data)$ is the sum of $err_1(T,
	data)$ and $err_2(T, data)$, we cannot use standard model tree learning
	algorithm to optimize $err(T, data)$ either. Fortunately, gradient descent does
	support this loss function. While gradient descent cannot directly learn model
	trees (See ~\cref{subsec:probdndt}), we can use gradient descent to train a \emph{neural} model tree $T$ to minimize $err(T, data)$. The learned neural networks can be converted to model trees, and then converted to expectations as before (See ~\cref{subsec:dndtconstruction}).

	\subsubsection{Difficulty of Training with Standard Algorithms.}
	\label{subsec:probdndt}
	When calculating the error contributed by one training example in $err_l$,
	$err_m$ or $err_1$, the training algorithm only needs to evaluate the model
	tree	$T$ on the feature vector of \emph{one} program state:
	the data entry $(s_i, v_i)$ in $\mathit{data}$ contributes
	\[\left(\post(s_i) + G(s_i) \cdot T(\mathcal{F}_l(s_i)) - v_i \right)^2 \]
	to $err_l^2(T, \mathit{data})$ and similarly
	$\left(\post(s_i) + G(s_i) \cdot T(\mathcal{F}_m(s_i)) - v_i \right)^2 $
	to $err_m^2(T, \mathit{data})$;
	the data entry $(s_i, S_i)$ contributes
	\[\max \left(0, \pre(s_i) - \post(s_i) - G(s_i) \cdot T(\mathcal{F}(s_i)) \right)\]
    to $err_1^2(T, \mathit{data})$.
	For $f = \mathcal{F}_l(s_i), \mathcal{F}_m(s_i)$ or $\mathcal{F}(s_i)$,
	when we calculate $T(f)$ for a single $f$ as in above expressions,
	either $T$'s root is a leaf and we simply apply its leaf model to the $f$,
	or we pass $f$ to exactly \emph{one} children subtree $T'$ of $T$
	and recursively calculate $T'(f)$.
	We associate a training example $(s_i, v_i)$ (or $(s_i, S_i)$) to a subtree
	$T'$ if the feature vector $f$ of $s_i$ gets passed to $T'$.
	Thus, different subtrees get associated with disjoint sets of training
	examples, and the standard training algorithm for model trees
	is able to adopt the divide-and-conquer strategy and optimize children of $T$
	independently.

	However, the $err_2$ error of a model tree $T$ on
	a training example $(s_i, S_i)$ is
	\[
		\max\bigg( 0, G(s_{i}) \cdot \post(s_{i}) + G(s_{i}) \cdot  T(\mathcal{F}(s_{i}))
		 -  G(s_{i}) \cdot \frac{1}{|S_i|} \cdot \sum_{s \in S_i} \big(\post(s) + G(s) \cdot T(\mathcal{F}(s)) \big),
		\]
	which depends on the evaluated values of $T(\mathcal{F}(s_i))$ and
	$T(\mathcal{F}(s))$ for all $s \in S_i$.  Evaluating $T(\mathcal{F}(s_i))$ and
	all $T(\mathcal{F}(s))$ can use multiple children subtrees of $T$. Thus, we
	cannot associate one training example $(s_i, S_i)$ to exactly one
	children of $T$ and optimize children of $T$ independently.
	While children of $T$ still gets associated with disjoint sets of feature vectors,
	we cannot train children of $T$ to minimize $err_2$ just with
	sets of program states $s_i$ because unlike when learning exact invariants,
	now we do not know what $T(\mathcal{F}(s_i))$ should be without calculating
	$\sum_{s \in S_i} \big(\post(s) + G(s) \cdot T(\mathcal{F}(s))$.
	Furthermore, because of the use of $\max(0,
	-)$ function in $err_2$, we cannot solve the problem
	by rearranging the terms across training examples.

	\subsubsection{Constructions of Neural Model Trees}
	\label{subsec:dndtconstruction}
	To address the problem of optimizing on $err_2$, we consider another general
	training algorithm, \emph{gradient descent}, i.e., iteratively
	taking the gradient of the loss with respect to the trainable parameters and
	adjust the parameters along  the gradient to minimize the loss. Although
	gradient descent only provides theoretical guarantee of finding global minimum
	when the loss is convex, gradient descent and its stochastic variant have
	showed good performance across a wide range
	of problems, as demonstrated by the recent success of neural networks.
	To apply gradient descent, however, we need the training model to
	be differentiable with respect to trainable parameters, and model trees and
	decision trees are not differentiable with respect to each predicate in the
	internal node.
	To address this problem, we use a differentiable approximation of
	model trees based on neural networks, which we call \emph{neural
	model trees}, and train them using standard gradient descent method.
	we start with a model called  \emph{neural decision tree} developed
	by~\citet{DBLP:journals/corr/abs-1806-06988}.
  As in standard decision tree learning, they consider internal nodes predicates
	and leaf labels as trainable parameters.
  Unfortunately, if we let the training parameter at
	an internal node range over predicates of the form $f \leq c$, it is unclear
  how to develop a differentiable loss function to optimize.
	Thus, deviating from standard decision tree, neural decision trees assume
	there is a predicate of the form $f_i \leq c_i$ for each feature $f_i$ that we
	can split on and regard the cut point $c_i$ as the training parameter,
	and in addition, they use a smooth approximation of the predicate $f_i \leq c_i$
	so the loss becomes differentiable.

	Concretely, a neural decision tree is a function
	\[
		\nndt: \text{Trainable parameters } \Theta \times \text{Data} \to  \text{Labels}
\]
	where the trainable parameters includes cut points $c_i$ for each feature $f_i \in
	\mathcal{F}$ and numerical labels.
  The neural architecture developed by \citet{DBLP:journals/corr/abs-1806-06988}
  implements the map $\nndt$ in two stages: $\nndt = \labeling \circ \classify$,
  where
  \[
    \labeling: \text{Trainable parameters } \Theta \times \text{Data} \to \text{OneHot}(|2^{\mathcal{F}}|)
  \]
  takes input $(\{c_i\}, d)$ to a one-hot vector of length $2^{\mathcal{F}}$
  where the hot bit represents the leaf that the data example $d$ gets
  classified into according to the smoothed version of predicates $f_i \leq
  c_i$, and
  \[
    \classify: \text{OneHot}(|2^{\mathcal{F}}|)\to \text{Labels}
  \]
  assigns a label to each leaf.

  To approximate model trees with differentiable neural networks,
	we define
	\[
		\nnmt: \text{Trainable parameters } \Theta \times \text{Data} \to \mathbb{R}
	\]
	by having $\nnmt(\theta, d) = \regress(\classify (\theta, d)) (d)$
	where
		\[
			\regress:  \text{OneHot}(|2^{\mathcal{F}}|)\to \text{Leaf Models}
		\]
	associates each one-hot vector with a trainable leaf model.

	Our tool assumes the leaf models are linear models or multiplication models.
	When the leaf models are linear, each leaf model can be represented by a vector
	of linear coefficients, and we can represent $\regress$ as a $|2^{\mathcal{F}}| \times
	|\mathcal{F}|$ matrix. When we want to fit neural model
	trees with multiplication leaf models, we take the logarithm of data passed to
	the neural model, train a model with linear leaf models, and exponentiate the
	output. In both cases, $\nnmt$ is differentiable with respect to its trainable
	parameters, so we can apply standard stochastic gradient descent to train it.

	\subsection{Verify Extracted Expectations (\verifyinv)}
	The verifier \verifyinv is very similar to the one in~\cref{sec:exactinv}
	except here it solves a different optimization problem.
	For each candidate $inv$ in the given list $invs$, it looks for
	a set $S$ of program states such that $S$ includes
	\begin{align*}
		\mathbf{argmax}_s \pre(s) - inv(s) \qquad \text{ and }\qquad
		\mathbf{argmax}_s  G(s) \cdot I(s) - [G] \cdot \weakp(P,I)(s).
	\end{align*}
	As in our approach for exact invariant learning, the verifier aims to find
	counterexample states $s$ that violate at least one of these constraints by as
	large of a margin as possible; these high-quality counterexamples guide data
	collection in the following iteration of the CEGIS loop. Concretely, the
	verifier accepts $inv$ if it cannot find any program state $s$ where $\pre(s) -
	inv(s)$ or $G(s) \cdot I(s) - [G] \cdot \weakp(P,I)(s)$ is positive. Otherwise,
	it adds all states $s \in S$ with strictly positive margin to the set of
	counterexamples $cex$.

	\section{Evaluation} \label{sec:eval}

	We implemented our prototype in Python, using sklearn and tensorflow to fit
	model trees and neural model trees, and Wolfram Alpha to verify and perform
	counterexample generation. We have evaluated our tool on a set of 18 benchmarks
	drawn from different sources in prior work \cite{DBLP:conf/qest/GretzKM13,
		DBLP:conf/cav/ChenHWZ15, kaminski2017weakest}.
    \fix{All our benchmarks are almost surely terminating:
    for all benchmarks except \gambler, their almost sure terminations are witnessed by simple ranking super-martingales that are linear on variables and tests on variables, e.g., $[x \geq 0]$; the gambler's ruin problem is also well-studied to be AST \citep{Chatterjee2016, gamblernotes}. While we check this condition by hand, existing work has explored synthesis of such ranking super-martingale~(e.g., \citep{Chatterjee2016,Chatterjee:2016:AAQ:2837614.2837639,mciver2016new, huang2019modular, majumdar2024sound}). }

    Our experiments were designed to
	address the following research questions:
	\begin{description}
		\item[R1.] Can \tool synthesize exact invariants for a variety of programs?
		\item[R2.] Can \tool synthesize sub-invariants for a variety of programs?
	\end{description}

	We summarize our findings as follows:
	\begin{itemize}
		\item {\tool} successfully synthesized and verified exact invariants for 14/18
		benchmarks within a timeout of 300 seconds. Our tool was able to generate
		these 14 invariants in reasonable time, taking between 1 to 237 seconds. The
		sampling phase dominates the time in most cases.  We also compare {\tool} with
		a tool from prior literature, \textsc{Mora}~\cite{bartocci2020mora}. We found
		that \textsc{Mora} can only handle a restrictive set of programs and cannot
		handle many of our benchmarks.  We also discuss how our work compares with a
		few others in (\Cref{sec:rw}).
		\item To evaluate sub-invariant learning, we created multiple problem
		instances for each benchmark by supplying different pre-expectations. On a
		total of 34 such problem instances, {\tool} was able to infer correct
		invariants in 27 cases, taking between 7 to 102 seconds.
	\end{itemize}
	\iffull
	We present in the Appendix~\cref{app:results}
	\else
	We present in \href{https://arxiv.org/abs/2106.05421}{the extended version}
	\fi
	the tables of complete experimental results.  Because the training data we
	collect are inherently stochastic, the results produced by our tool are not
	deterministic.\footnote{%
		The code and data sampled in the trial that produced the tables in this paper
		can be found at \url{https://github.com/JialuJialu/Exist}.}
	As expected, sometimes different trials on the same benchmarks generate
	different sub-invariants; while the exact invariant for each benchmark is unique,
	\tool may also generate semantically equivalent but syntactically different
	expectations in different trials (e.g. it happens for \biasdir).

	\paragraph{Implementation Details.}
	For input parameters to \tool, we use $\nruns = 500$ and $\nstates = 500$.
	Besides input parameters listed in~\cref{fig:cegl}, we allow the user to supply a
	list of features as an optional input. In feature generation, \getfeatures
	enumerates expressions made up by program variables and user-supplied features
	\iffull
	according to a grammar described in~\cref{app:params}.
	\else
	according to a grammar.
	\fi
	Also, when incorporating counterexamples $cex$, we make 30 copies of each
	counterexample to give them more weights in the training.  All experiments were
	conducted on a MacBook Pro 2020 with M1 chip running macOS Monterey Version
	12.1.

	\subsection{R1: Evaluation of the Exact Invariant Method}
	\paragraph{Efficacy of Invariant Inference.}
	\begin{table}[t]
		\centering
		\caption{Exact Invariants generated by \tool}
		\label{table:exactinv}
		\begin{tabular}{ccccccc}
			\toprule
			\rowcolor{gray} \textbf{Name} &
			{$\post$} &
			\textbf{Learned Invariant} &
			\textbf{\Centerstack{ ST }} &
			\textbf{\Centerstack{ LT }} &
			\textbf{\Centerstack{ VT }} &
			\textbf{\Centerstack{ TT }} \\

			\midrule
			\addlinespace[0.8em]
			\biasdir &
			$x$ &
			\makecell{$x + [x == y] \cdot(-0.2\cdot x^2 - 0.2 \cdot y^2 $ \\
				$ -0.2 \cdot x \cdot y - 0.2 \cdot x - 0.2 \cdot y +0.5)$} &
			23.97 &
			0.34 &
			0.25 &
			24.56 \\
			\addlinespace[0.8em]
			\binzero &
			$x$ &
			\makecell{$x + [n > 0] \cdot ([y \neq 0] \cdot p \cdot n \cdot y ) $} &
			72.80 &
			5.09 &
			1.15 &
			79.04 \\
			\addlinespace[0.8em]
			\binone &
			$n$ &
			\makecell{$x + [n < M] \cdot (M \cdot p - n \cdot p) $} &
			25.67 &
			12.03 &
			0.22 &
			37.91 \\
			\addlinespace[0.8em]
			\bintwo &
			$x$ &
			\makecell{$x + [n > 0] \cdot(p \cdot n \cdot n $\\
				$ -p \cdot n \cdot y + n \cdot y)$} &
			84.64 &
			26.54 &
			0.48 &
			111.66 \\
			\addlinespace[0.8em]
			\deprv &
			$x\cdot y $ &
			- &
			- &
			- &
			- &
			- \\
			\addlinespace[0.8em]
			\detm &
			$count$ &
			\makecell{$count +  [x \leq 10] \cdot (11 -x) $} &
			0.09 &
			0.72 &
			0.06 &
			0.87 \\
			\addlinespace[0.8em]
			\duel &
			$t$ &
			- &
			- &
			- &
			- &
			- \\
			\addlinespace[0.8em]
			\fair &
			$count$ &
			\makecell{$count + [c1 + c2 == 0] \cdot $ \\
				$(p1+p2) / (p1+p2-p1 \cdot p2)$} &
			5.78 &
			1.62 &
			0.30 &
			7.69 \\
			\addlinespace[0.8em]
			\gambler &
			$z$ &
			\makecell{$z + [x > 0 \text{ and } y > x] \cdot $\\
				$ x \cdot (y-x)$} &
			112.02 &
			3.52 &
			9.97 &
			125.51 \\
			\addlinespace[0.8em]
			\geozero &
			$z$ &
			\makecell{$ z + [flip == 0] \cdot (1-p_1)/ p_1 $} &
			12.01 &
			0.85 &
			2.65 &
			15.51 \\
			\addlinespace[0.8em]
			\geoone &
			$z$ &
			\makecell{$ z + [flip == 0] \cdot (1-p_1)/ p_1 $} &
			20.30 &
			5.20 &
			3.57 &
			29.09 \\
			\addlinespace[0.8em]
			\geotwo &
			$z$ &
			\makecell{$ z + [flip == 0] \cdot (1-p_1)/ p_1 $} &
			10.78 &
			2.17 &
			0.12 &
			13.07 \\
			\addlinespace[0.8em]
			\geoar &
			$x$ &
			- &
			- &
			- &
			- &
			- \\
			\addlinespace[0.8em]
			\linexp &
			$z$ &
			- &
			- &
			- &
			- &
			- \\
			\addlinespace[0.8em]
			\mart &
			$rounds$&
			\makecell{$rounds + [b > 0] \cdot (1/p) $} &
			24.10 &
			3.83 &
			0.05 &
			27.98 \\
			\addlinespace[0.8em]
			\prinsys &
			$[x==1]$ &
			\makecell{$ [x == 1] + [x == 0] \cdot (1-p_2)$} &
			1.60 &
			0.17 &
			1.25 &
			3.02 \\
			\addlinespace[0.8em]
			\revbin &
			$z$ &
			$z + [x > 0] \cdot (x/p)$ &
			234.64 &
			3.13 &
			0.14 &
			237.92 \\
			\addlinespace[0.8em]
			\progsum &
			$x$ &
			\makecell{$x + [n > 0] \cdot (0.5 \cdot p \cdot n^2 + $ \\
				$ 0.5 \cdot p \cdot n)$} &
			102.12 &
			34.61 &
			26.74 &
			163.48 \\
			\addlinespace[0.8em]
		\end{tabular}
	\end{table}

	{\tool} was able to infer exact invariants in 14/18 benchmarks. Out
  of 14 \fix{successfully inferred} benchmarks, only 2 of them need user-supplied features
($n \cdot p$ for \bintwo and \progsum).
	~\Cref{table:exactinv}
	shows the postexpectation ($\post$), the inferred
	invariant (Invariant), sampling time (ST), learning time (LT),
	verification time (VT) and the total time (TT) for a few benchmarks.
	For generating exact invariants, the running time of {\tool} is dominated by the
	sampling time. However, this phase can be parallelized easily.

  \fix{
    We (manually) check that all the inferred invariants only evaluates to non-negative values if all the program variables take non-negative values, which is required for them to be expectations as defined in~\cref{def:expectation} and to apply~\cref{theorem:equivalence}. We then (manually) check whether the inferred invariants are provably the weakest preexpectations according to~\cref{theorem:equivalence}.
    We find 13 out of the 14 exact invariants inferred by \tool satisfy at least one condition, and thus, are provably the weakest preexpectation:
    the invariants inferred for \binzero, \binone, \bintwo and \progsum satisfy condition (a), and the subinvariants inferred for the rest
  of the benchmarks except \gambler all satisfy the condition (b).
The invariant \tool inferred for \gambler, $z + [x > 0 \text{ and } y > x] \cdot x \cdot (y-x)$, does not satisfy any of the conditions, but it is sound according to known results about random walks: the postexpectation $z$ increases 1 at each iteration, and recurrence analysis shows that $x \cdot (y - x)$ is the expected number of iterations \gambler~\citep{gamblernotes}. }

	\paragraph{Failure Analysis. }
	\tool failed to generate invariants for 4/18 benchmarks. For two of them,
	\tool was able to generate expectations that are very close to an
	invariant (\deprv and \linexp);
	for the third failing benchmarks (\duel),
	the ground truth invariant is very complicated.
	For \linexp, while a
	correct invariant is $z + [n > 0] \cdot 2.625 \cdot n$, \tool generates
	expectations like $z + [n > 0] \cdot (2.63\cdot n-0.02)$ as candidates. For
	\deprv, a correct invariant is $x\cdot y + [n>0] \cdot (0.25 \cdot n^2 +0.5
	\cdot n \cdot x + 0.5\cdot n \cdot y -0.25 \cdot n)$, and in our experiment
	\tool generates $0.25 \cdot n^2 +0.5 \cdot n \cdot x + 0.5\cdot n \cdot y -0.27
	\cdot n-0.01 \cdot x+0.12$. In both cases, the ground truth invariants use
	coefficients with several digits, and since learning from data is inherently
	stochastic, \tool cannot generate them consistently. In our experiments, we
	observe that our CEGIS loop does guide the learner to move closer to the
	correct invariant in general, but sometimes progress obtained in multiple iterations can
	be offset by noise in one iteration. For \geoar, we observe the verifier
	incorrectly accepted the complicated candidate invariants generated by
	the learner because Wolfram Alpha was not able to find valid counterexamples
	for our queries.

\paragraph*{Comparison with Previous Work.}

	\begin{table}[t]
		\centering
		\caption{Encoding of \geoar and \mart in \textsc{Mora} Syntax}
		\label{table:moraencoding}
		\begin{tabular}{|c|c|}
			\hline
			\rowcolor{gray} \textbf{Program} &
			\textbf{Encoding} \\ \hline\hline

			\begin{minipage}[t]{0.45\textwidth}
				\centering
				\begin{minted}[xleftmargin=\parindent, tabsize=2, numbersep=-1em, fontsize=\footnotesize, linenos, escapeinside=||, mathescape=true]{python}
					bool z, int x,y, float p
					while (z |$\neq$| 0) do
					y |$\gets$| y + 1;
					c |$\overset{\$}{\leftarrow}$| Bernoulli(p)
					if c then z |$\gets$| 0
					else x |$\gets$| x + y
				\end{minted}
			\end{minipage}

			&
			\begin{minipage}[t]{0.45\textwidth}
				\centering
				\begin{minted}[xleftmargin=\parindent, tabsize=2, numbersep=-1em, fontsize=\footnotesize, linenos, escapeinside=||, mathescape=true]{python}
					int z = 1, x = 0, y = 0
					while true:
					y |$\gets$| y + z;
					z |$\gets$| 0 |$@$| p;z
					x |$\gets$| x + y * z
				\end{minted}
			\end{minipage}\\ \hline

			\begin{minipage}[t]{0.45\textwidth}
				\centering
				\begin{minted}[xleftmargin=\parindent, tabsize=2, numbersep=-1em, fontsize=\footnotesize, linenos, escapeinside=||, mathescape=true]{python}
					int c, b, rounds, float p
					while (b |$>$| 0) do
					d |$\overset{\$}{\leftarrow}$| Bernoulli(p)
					if d then
					c |$\gets$| c + b;
					b |$\gets$| 0;
					else
					c |$\gets$| c - b;
					b |$\gets$| 2 * b;
					rounds |$\gets$| rounds + 1;
				\end{minted}
			\end{minipage}

			&
			\begin{minipage}[t]{0.45\textwidth}
				\centering
				\begin{minted}[xleftmargin=\parindent, tabsize=2, numbersep=-1em, fontsize=\footnotesize, linenos, escapeinside=||, mathescape=true]{python}
					int c = 0
					b = 1
					d = 0
					rounds = 1
					while true:
					d |$\gets$| 1 |$@$| p;d
					c |$\gets$| c + b*(d-1)+b*d
					b |$\gets$| 2*b*(1-d)
					rounds |$\gets$| rounds + 1 - d
				\end{minted}
			\end{minipage}\\ \hline

		\end{tabular}
	\end{table}

	There are few existing tools that can automatically compute expected values
	after probabilistic loops. We experimented with one such tool, called
	\textsc{Mora}~\citep{bartocci2020mora}.
	We managed to encode our benchmarks \geozero, \binzero,
	\bintwo, \geoone, \geoar, and \mart in their syntax.  Among them, \textsc{Mora}
	fails to infer an invariant for \geoone, \geoar, and \mart.  We also tried to
	encode our benchmarks \fair, \gambler, \binone, and \revbin but found
	\textsc{Mora}'s syntax was too restrictive to encode them. \Cref{table:moraencoding} shows how we encoded two of our benchmarks into \textsc{Mora}.

	\subsection{R2: Evaluation of the Sub-invariant Method}
	\begin{table}[h]
		\centering
		\caption{Sub-Invariants generated by \tool}
		\label{table:subinv1}
		\begin{tabular}{cccccccc}
			\toprule
			\rowcolor{gray} \textbf{Name} &
			{$\post$} &
			$\pre$ &
			\textbf{Learned Invariant} &
			\textbf{\Centerstack{ ST }} &
			\textbf{\Centerstack{ LT }} &
			\textbf{\Centerstack{ VT }} &
			\textbf{\Centerstack{ TT }} \\
			\midrule
			\addlinespace[0.8em]

			\multirow{2}{*}{\biasdir} &
			\multirow{2}{*}{$x$} &
			$[x \neq y]\cdot x$ &
			\makecell{$x + [x == y]\cdot $\\$(0.1 \cdot p - 0.5 \cdot x$\\$-0.5 \cdot y + 0.1)$} &
			12.37 &
			21.36 &
			0.75 &
			34.48 \\
			\addlinespace[0.8em]

			&
			&
			$[x == y]\cdot 1/2$ &
			\makecell{$x + [x == y]\cdot $\\$ (-0.5 \cdot x  $ \\ $ -0.5 \cdot y + 0.5)$} &
			12.33&
			27.12&
			0.10&
			39.55\\
			\addlinespace[0.8em]

			\multirow{2}{*}{\binzero} &
			\multirow{2}{*}{$x$} &
			\makecell{$x+[n > 0] \cdot $ \\ $ (p \cdot n \cdot y) $} &
			- &
			- &
			- &
			- & - \\
			\addlinespace[0.8em]

			&
			&
			$x$ &
			$x$ &
			16.69 &
			21.47 &
			0.44 &
			38.59 \\
			\addlinespace[0.8em]

			\multirow{2}{*}{\binone} &
			\multirow{2}{*}{$n$} &
			\makecell{$x + [n < M] \cdot $ \\ $ (p \cdot M - p \cdot n)$} &
			- &
			- &
			- &
			- &
			- \\
			\addlinespace[0.8em]

			&
			&
			$n$ &
			$x $ &
			8.55 &
			17.84 &
			0.10 &
			26.52 \\
			\addlinespace[0.8em]

			\multirow{2}{*}{\bintwo} &
			\multirow{2}{*}{$x$} &
			\makecell{$x + [n > 0] \cdot $ \\ $ (1-p) \cdot n\cdot y$} &
			- &
			- &
			- &
			- &
			- \\
			\addlinespace[0.8em]

			&
			&
			$x$ &
			$x$ &
			16.71 &
			21.64 &
			0.78 &
			39.13 \\
			\addlinespace[0.8em]

			\multirow{2}{*}{\deprv} &
			\multirow{2}{*}{$x\cdot y$} &
			\makecell{$x \cdot y + [n>0]\cdot $ \\ $ (1/4\cdot n\cdot n )$} &
			- &
			- &
			- &
			- &
			- \\
			\addlinespace[0.8em]

			&
			&
			$x \cdot y$ &
			$x \cdot y$ &
			16.08 &
			17.04 &
			0.17 &
			33.28 \\
			\addlinespace[0.8em]

			\multirow{2}{*}{\detm} &
			\multirow{2}{*}{$count$} &
			\makecell{$count +$ \\  $[x <= 10]\cdot 1$} &
			\makecell{$count +$ \\ $[x <= 10]\cdot 1$} &
			3.95 &
			17.67&
			0.03&
			21.66\\
			\addlinespace[0.8em]

			&
			&
			$count$ &
			$count$ &
			5.99 &
			10.31 &
			0.04 &
			16.35 \\
			\addlinespace[0.8em]

			\duel &
			$t$ &
			\makecell{$c\cdot(-p_1 + p_2 $ \\ $- p_1 \cdot p_2)$ \\
				$+1$} &
			- &
			- &
			- &
			- &
			- \\
			\addlinespace[0.8em]

			\multirow{2}{*}{\fair} &
			\multirow{2}{*}{$count$} &
			\makecell{$count + $ \\ $ [c_1 + c_2 == 0] \cdot $ \\ $ (p_1 + p_2)$} &
			\makecell{$[ c_1 + c_2 == 0]\cdot $ \\ $(p_1 + p_2) $\\$+count$} &
			8.74 &
			25.84 &
			0.27 &
			34.85 \\
			\addlinespace[0.8em]

			&
			&
			$count$&
			\makecell{$count$} &
			6.71 &
			11.73 &
			0.40 &
			18.85 \\
			\addlinespace[0.8em]

			\multirow{2}{*}{\mart} &
			\multirow{2}{*}{$rounds$} &
			\makecell{$rounds + $ \\ $[b > 0] \cdot 1$} &
			\makecell{$rounds + $ \\ $[b > 0] \cdot 1$} &
			17.68 &
			31.49 &
			0.11 &
			49.27 \\
			\addlinespace[0.8em]

			&
			&
			$rounds$ &
			\makecell{$rounds$} &
			16.61 &
			21.32 &
			0.16 &
			38.09 \\
			\addlinespace[0.8em]

			\bottomrule
		\end{tabular}
	\end{table}

	\begin{table}[h]
		\raggedright
		\caption{Table \ref{table:subinv1} Continued}
		\label{table:subinv2}
		\begin{tabular}{cccccccc}
			\toprule
			\rowcolor{gray} \textbf{Name} &
			{$\post$} &
			$\pre$ &
			\textbf{Learned Invariant} &
			\textbf{\Centerstack{ ST }} &
			\textbf{\Centerstack{ LT }} &
			\textbf{\Centerstack{ VT }} &
			\textbf{\Centerstack{ TT }} \\
			\midrule
			\addlinespace[0.8em]

			\multirow{2}{*}{\gambler} &
			\multirow{2}{*}{$z$} &
			\makecell{$z$} &
			\makecell{$z$} &
			6.99 &
			12.56 &
			0.43 &
			19.98 \\
			\addlinespace[0.8em]

			&
			&
			\makecell{$x \cdot (y-x)$} &
			\makecell{$z + $ \\ $[x > 0 \text{ \& } y > x] \cdot$ \\ $ x \cdot (y-x)$} &
			7.31 &
			28.87 &
			8.29 &
			44.46 \\
			\addlinespace[0.8em]

			\multirow{3}{*}{\geozero} &
			\multirow{3}{*}{$z$} &
			\makecell{$z+ $ \\ $[flip == 0] \cdot$ \\ $(1-p_1)$} &
			\makecell{$z+ $ \\ $[flip == 0] \cdot$ \\ $(1-p_1)$} &
			8.69 &
			28.04 &
			0.10 &
			36.84 \\
			\addlinespace[0.8em]

			&
			&
			$z$ &
			$z$ &
			8.08 &
			12.01 &
			3.62 &
			23.71 \\
			\addlinespace[0.8em]

			&
			&
			\makecell{$[flip == 0]\cdot $ \\ $ (1-p_1))$}&
			\makecell{$z+ $ \\ $[flip == 0] \cdot$ \\ $(1-p_1)$}&
			8.70 &
			26.13 &
			0.19 &
			35.02 \\
			\addlinespace[0.8em]

			\geoone&
			$z$ &
			$z$ &
			\makecell{z} &
			8.80 &
			13.66 &
			0.03 &
			22.48 \\
			\addlinespace[0.8em]

			\geotwo &
			$z$ &
			$z$ &
			$z$ &
			8.19 &
			14.49 &
			0.05 &
			22.73 \\
			\addlinespace[0.8em]

			\multirow{2}{*}{\geoar} &
			\multirow{2}{*}{$x$} &
			\makecell{$x+[z!=0] \cdot $ \\ $y \cdot(1-p)/p$} &
			- &
			- &
			- &
			- &
			- \\
			\addlinespace[0.8em]

			&
			&
			$x$ &
			$x$ &
			8.51 &
			40.98 &
			0.39 &
			49.89 \\
			\addlinespace[0.8em]

			\multirow{2}{*}{\linexp} &
			\multirow{2}{*}{$z$} &
			\makecell{$z + [n>0]\cdot $ \\ $2$} &
			\makecell{$[n > 0]\cdot $ \\ $(n + 1)$} &
			53.72 &
			30.01 &
			0.35 &
			84.98 \\
			\addlinespace[0.8em]

			&
			&
			\makecell{$z + [n>0]\cdot $ \\ $ 2 \cdot n$ }&
			\makecell{$z + [n>0]\cdot $ \\ $ 2 \cdot n$ } &
			29.18 &
			28.61 &
			0.68 &
			58.48 \\
			\addlinespace[0.8em]

			\prinsys &
			\makecell{$[x==1] \cdot  $ \\ $1$} &
			\makecell{$[x==1] \cdot  $ \\ $1$} &
			\makecell{$[x==1]$} &
			1.10 &
			5.85 &
			0.33 &
			7.29 \\
			\addlinespace[0.8em]

			\multirow{2}{*}{\revbin} &
			\multirow{2}{*}{$z$} &
			\makecell{$z + [x > 0] \cdot $ \\ $ x$ }&
			\makecell{$z + [x > 0]\cdot $ \\ $ x / p $ }&
			18.17 &
			71.15 &
			2.17 &
			91.55 \\
			\addlinespace[0.8em]

			&
			&
			$z$ &
			$z$ &
			15.62 &
			18.74 &
			0.06 &
			34.42 \\
			\addlinespace[0.8em]

			\multirow{2}{*}{\progsum} &
			\multirow{2}{*}{$x$} &
			\makecell{$x + [n > 0]\cdot $ \\ $ (p\cdot n\cdot n/2)$} &
			- &
			- &
			- &
			- &
			- \\
			\addlinespace[0.8em]

			&
			&
			\makecell{$x + [n > 0]\cdot $ \\ $ (p\cdot n /2)$} &
			\makecell{$x + [n>0]\cdot$ \\ $ (p \cdot n)$} &
			19.60&
			76.71 &
			5.94 &
			102.29
			\\
			\addlinespace[0.8em]

			\bottomrule
		\end{tabular}
	\end{table}

	\paragraph{Efficacy of invariant inference.}

    {\tool} is able to synthesize sub-invariants for 27/34
    benchmarks. Two out of 27 successfully inferred benchmarks use
    user-supplied features -- \gambler with pre-expectation $x \cdot (y-x)$
    uses $(y-x)$, and \progsum with pre-expectation $x + [x>0] \cdot (p \cdot
    n/2)$ uses $p \cdot n$. Contrary to the case for exact invariants, the
    learning time dominates. This is not surprising: the sampling time is
    shorter because we only run one iteration of the loop, but the learning
    time is longer as we are optimizing a more complicated loss function.

  \fix{
  We check whether the subinvariants synthesized by \tool satisfy one of the
  conditions (a), (b) or (c) in~\cref{theorem:equivalence}, and thus, provably
  lower bounds the weakest preexpectation. There are two subinvariants that
  do not satisfy any of the conditions:
  \gambler with inferred subinvariant $z + [x > 0 \text{ and } y > x] \cdot
  x \cdot (y - x)$ and \geoar with the inferred invariants $x$;  both of them do lower bound the expectations calculated manually through~\cref{psemantics}.
  For the rest of 25 subinvariants inferred by \tool,
  the subinvariants inferred by \tool for \binzero, \binone, \bintwo, \linexp, \deprv and
  \progsum satisfy condition (a), and the subinvariants inferred for the rest
  of the benchmarks satisfy the condition (b). }
  As before,~\cref{table:subinv1} reports the details for all these benchmarks.

	One interesting thing that we found when gathering benchmarks is that for many
	loops, pre-expectations used by prior work or natural choices of pre-expectations
	are themselves sub-invariants.  Thus, for some instances, the sub-invariants
	generated by \tool is the same as the pre-expectation $\pre$ given to it as
	input. However, \tool is not checking whether the given $\pre$ is a
	sub-invariant: the learner in \tool does not know about $\pre$ besides the value
	of $\pre$ evaluated on program states. Also, we also designed benchmarks where
	pre-expectations are \emph{not} sub-invariants (\biasdir with $\pre = [x \neq y]
	\cdot x$, \deprv with $\pre = x \cdot y + [n > 0] \cdot 1/4 \cdot n^2$, \gambler
	with $\pre = x \cdot (y-x)$, \geozero with $\pre = [flip == 0] \cdot (1-p1)$),
	and \tool is able to generate sub-invariants for 3/4 such benchmarks.

  \paragraph{Failure Analysis.}
	On program instances where \tool fails to generate a sub-invariant, we observe
	two common causes. First, gradient descent seems to get stuck in local minima
	because the learner returns suboptimal models with relatively low loss.  The
	loss we are training on is very complicated and likely to be highly non-convex,
	so this is not surprising.  Second, we observed inconsistent behavior due to
	noise in data collection and learning.  For instance, for \geoar with $\pre = x
	+ [z \neq 0] \cdot y \cdot (1-p)/p$, \tool could sometimes find a sub-invariant
	with supplied feature $(1-p)$, but we could not achieve this result
	consistently.

	\paragraph{Comparison with Learning Exact Invariants. }
	The performance of \tool on learning sub-invariants is less sensitive to the
	complexity of the ground truth invariants.  For example, \tool is not able to
	generate an exact invariant for \linexp as its exact invariant is complicated,
	but \tool is able to generate sub-invariants for \linexp.  However, we also
	observe that when learning sub-invariants, \tool returns complicated expectations
	with high loss more often.

		\section{Related Work} \label{sec:rw}

		\paragraph*{Invariant Generation for Probabilistic Programs.}
		There has been a steady line of work on probabilistic invariant generation
		over the last few years. The \textsc{Prinsys}
		system~\citep{DBLP:conf/qest/GretzKM13} employs a template-based approach to
		guide the search for probabilistic invariants. \textsc{Prinsys} is able encode
		invariants with guard expressions, but the system doesn't produce invariants
		directly---instead, \textsc{Prinsys} produces logical formulas encoding the
		invariant conditions, which must be solved manually.

    \citet{DBLP:conf/cav/ChenHWZ15} proposed a counterexample-guided approach
    to find polynomial invariants, by applying Lagrange interpolation. However, invariants
		involving guard expressions---common in our examples---cannot be found,
		since they are not polynomials. Additionally,
		\citet{DBLP:conf/cav/ChenHWZ15} uses a weaker notion of invariant, which
		only needs to be correct on certain initial states; our tool generates
		invariants that are correct on all initial states. \citet{feng2017finding}
		improves on \citet{DBLP:conf/cav/ChenHWZ15} by using \emph{Stengle's
			Positivstellensatz} to encode invariants constraints as a semidefinite
		programming problem. Their method can find polynomial sub-invariants that are
		correct on all initial states. However, their approach cannot synthesize
		piecewise linear invariants, and their implementation has additional
		limitations and could not be run on our benchmarks.

    Subsequent to the original publication of our results,
    \citet{DBLP:conf/tacas/BatzCJKKM23} proposed a different method to
    synthesize invariant expectations. Their approach is based on a rich class
    of templates with numerical-valued holes, and uses an efficient CEGIS loop
    to improve the invariant expectations. Unlike our approach,
    \citet{DBLP:conf/tacas/BatzCJKKM23} use access to the program source code.
    In this way, they are able to use a verifier to check correctness of
    invariants and find counterexamples within their CEGIS loop. In contrast,
    our approach does not rely on direct verification during synthesis since our
    method does not have access to the internals of the program.

		There is also a line of work on abstract interpretation for analyzing
		probabilistic programs; \citet{DBLP:conf/sas/ChakarovS14} search for linear
		expectation invariants using a ``pre-expectation closed cone domain'', while
		recent work by \citet{DBLP:conf/pldi/WangHR18} employs a sophisticated
		algebraic program analysis approach.

		Another line of work applies \emph{martingales} to derive insights of
		probabilistic programs.  \citet{Chakarov-martingale} showed several
		applications of martingales in program analysis, and \citet{BEFFH16} gave a
		procedure to generate candidate martingales for a probabilistic program;
		however, this tool gives no control over which expected value is
		analyzed---the user can only guess initial expressions and the tool generates
		valid bounds, which may not be interesting. Our tool allows the user to pick
		which expected value they want to bound.

		Another line of work for automated reasoning uses \emph{moment-based
			analysis}. \citet{bartocci2019automatic,bartocci2020mora} develop the
		\textsc{Mora} tool, which can find the moments of variables as functions of
		the iteration for loops that run forever by using ideas from computational
		algebraic geometry and dynamical systems. This method is highly efficient
		and is guaranteed to compute moments exactly.  However, there are two
		limitations. First, the moments can give useful insights about the
		distribution of variables' values after each iteration, but they are
		fundamentally different from our notion of invariants which allow us to
		compute the expected value of any given expression \emph{after termination}
		of a loop. Second, there are important restrictions on the probabilistic
		programs. For instance, conditional statements are not allowed and the use
		of symbolic inputs is limited. As a result, most of our benchmarks cannot be
		handled by \textsc{Mora}.

		In a similar vein, ~\citet{kura2019tail,wang2021central} bound higher \emph{central moments}
		for running time and other monotonically increasing quantities. Like our
		work, these works consider probabilistic loops that terminate. However,
		unlike our work, they are limited to programs with constant size increments.

		\paragraph*{Data-driven Invariant Synthesis.}
		We are not aware of other data-driven methods for learning probabilistic
		invariants, but a recent work \citet{DBLP:conf/cav/AbateGR20} proves
		probabilistic termination by learning ranking supermartingales from trace
		data. Our method for learning sub-invariants (\cref{sec:subinv}) can be seen
		as a natural generalization of their approach. However, there are also
		important differences. First, we are able to learn general sub-invariants,
		not just ranking supermatingales for proving termination.  Second, our
		approach aims to learn model trees, which lead to simpler and more
		interpretable sub-invariants. In contrast, Abate, et
		al.~\citep{DBLP:conf/cav/AbateGR20} learn ranking functions encoded as
		two-layer neural networks.

		Data-driven inference of invariants for deterministic programs has drawn a lot
		of attention, starting from  \textsc{Daikon}~\cite{daikon}. ICE learning with
		decision trees~\cite{icedt} modifies the decision tree learning algorithm to
		capture \textit{implication counterexamples} to handle inductiveness.
		\textsc{Hanoi}~\cite{Miltner2020} uses counterexample-based inductive
		synthesis (CEGIS)~\cite{sketch} to build a data-driven invariant inference
		engine that alternates between weakening and strengthening candidates for
		synthesis. Recent work uses neural networks to learn invariants
		\cite{code2inv}. These systems perform classification, while our work
    uses regression. Data from fuzzing has been used for \textit{almost correct}
    inductive invariants for programs with closed-box
    operations~\cite{Lahiri:22}.

		\paragraph*{Probabilistic Reasoning with Pre-expectations.}
		Following Morgan and McIver, there are now pre-expectation calculi for
		domain-specific properties, like expected runtime~\citep{KaminskiKMO16} and
		probabilistic sensitivity~\citep{ABHKKM19}. All of these systems
		define the pre-expectation for loops as a least fixed-point, and practical
		reasoning about loops requires finding an invariant of some kind.

\section{Conclusion}
Inspired by data-driven invariant generation techniques for standard programs,
we present the first data-driven invariant generation algorithm for
probabilistic program.  Our method is the first to be able to learn exact
piecewise linear probabilistic invariants fully automatically, without relying
on templates or manually solving logical formulas.

Going forward, one potential
direction is to improve sampling performance. While our tool finds invariants in
a reasonable amount of time, it needs many input-output traces in order to
reliably find invariants. Methods from statistics, like boosting, might be
useful to increase stability of our learning approach.  More broadly, a natural
question is whether other quantitative invariants could be learned through
regression, rather than classification. In general, our work further strengthens the research direction invested in exploring the synergy between machine learning and formal methods is solving hard tasks like program verification.

		\subsection*{Acknowledgements}
    This work is in part supported by National Science Foundation grant
    \#1943130 and \#2152831. We also thank Google Research for supporting our research. We thank Ugo Dal Lago, I\c{s}il Dillig, IIT Kanpur PRAISE
    group, Cornell PL Discussion Group, and all reviewers for helpful feedback.
    We also thank Anmol Gupta in IITK for building a prototype verifier using
    Mathematica.

\subsection*{Data availability statement}
The source code of the implementation and data is publicly available at: \\ \url{https://github.com/JialuJialu/Exist}
		%
		%
		%

		\begin{appendices}

			\section{Benchmark Programs} \label{app:bench}
			For benchmarks that have multiple similar variations,
			we omit some simpler ones.

			\begin{figure*}
				\begin{lstlisting}
					int z, bool flip, float p1
					while (flip == 0):
						d = bernoulli.rvs(size=1, p=p1)[0]
						if d:
							flip = 1
						else:
							z = z + 1
				\end{lstlisting}
				\caption{\geozero: Adapted from Listing 1 from Prinsys paper}
				\label{fig:geo0}
			\end{figure*}

			\begin{figure*}
				\begin{lstlisting}
					int z, x, bool flip, float p1
					while (flip == 0):
						d = bernoulli.rvs(size=1, p=p1)[0]
						if d:
							flip = 1
						else:
							x = x * 2
							z = z + 1
				\end{lstlisting}
				\caption{\geoone: Adapted from~\cref{fig:geo0}}
				\label{fig:geo0a}
			\end{figure*}

			\begin{figure*}
				\begin{lstlisting}
					int count, bool c1, c2 float p1, p2
					while not (c1 or c2):
						c1 = bernoulli.rvs(size=1, p=p1)[0]
						if c1:
							count = count + 1
						c2 = bernoulli.rvs(size=1, p=p2)[0]
						if c2:
							count = count + 1\end{lstlisting}
				\caption{\fair: Two coins in one loop}
				\label{fig:ex1}
			\end{figure*}

			\begin{figure*}
				\begin{lstlisting}
					int c, b, rounds, float p
					while b > 0:
						d = bernoulli.rvs(size=1, p=p)
						if d:
							c = c+b
							b = 0
						else:
							c = c-b
							b = 2*b
						rounds += 1
				\end{lstlisting}
				\caption{\mart: Martingale, Listing 4 from Prinsys paper}
				\label{fig:ex2}
			\end{figure*}

			\begin{figure*}
				\begin{lstlisting}
					int x, y, z, float p
					while 0 < x and x < y:
						d = bernoulli.rvs(size=1, p=p)[0]
						if d:
							x = x + 1
						else:
							x = x - 1
							z = z + 1
						rounds += 1
					\end{lstlisting}
				\caption{\gambler: Gambler's ruin}
				\label{fig:ex4}
			\end{figure*}

			\begin{figure*}
				\begin{lstlisting}
					int x, y, z, float p
					while not (z == 0):
						y = y + 1
						d = bernoulli.rvs(size=1, p=p)[0]
						if(d):
							z = 0
						else:
							x = x + y
				\end{lstlisting}
				\caption{\geoar: Geometric distribution mixed with Arithmetic progression}
			\end{figure*}

			\begin{figure*}
				\begin{lstlisting}
					int x, y, n, float p
					while(n > 0):
						d = bernoulli.rvs(size=1, p=p)[0]
						if(d):
							x = x + y
						n = n-1
					\end{lstlisting}
				\caption{\binzero: A plain binomial distribution}
				\label{fig:ex7}
			\end{figure*}

			\begin{figure*}
				\begin{lstlisting}
					int x, n, M, float p
					while n - M < 0:
						d = bernoulli.rvs(size=1, p=p)[0]
						if d:
							x = x + 1
							n = n + 1
				\end{lstlisting}
				\caption{\binone: A binomial distribution with an unchanged variable}
				\label{fig:ex18}
			\end{figure*}

			\begin{figure*}
				\begin{lstlisting}
					int x, y, n, float p
					while(n > 0):
						d = bernoulli.rvs(size=1, p=p)[0]
						if(d):
							x = x + n
						else:
							x = x + y
							n = n-1
				\end{lstlisting}
				\caption{\bintwo: Binomial distribution mixed with arithmetic progression sums}
				\label{fig:ex8p}
			\end{figure*}

			\begin{figure*}
				\begin{lstlisting}
					int n, count
					while(n > 0):
						x1 = bernoulli.rvs(size=1, p=0.5)[0]
						x2 = bernoulli.rvs(size=1, p=0.5)[0]
						x3 = bernoulli.rvs(size=1, p=0.5)[0]
						n = n - 1
						c1 = x1 or x2 or x3
						c2 = (not x1) or x2 or x3
						c3 = x1 or (not x2) or x3
						count = count + c1 + c2 + c3
				\end{lstlisting}
				\caption{\linexp: Mix of binomial distribution and linearity of expectation}
				\label{fig:ex20}
			\end{figure*}

			\begin{figure*}
				\begin{lstlisting}
					int x, n, float p
					while(n > 0):
						d = bernoulli.rvs(size=1, p=p)[0]
						if(d):
							x = x + n
						n = n - 1
				\end{lstlisting}
				\caption{\progsum: Probabilistic sum of arithmetic series}
				\label{fig:ex9}
			\end{figure*}

			\begin{figure*}
				\begin{lstlisting}
					int x, y, n, float p
					while(n > 0):
						d = bernoulli.rvs(size=1, p=p)[0]
						if(d):
							x = x + 1
						else:
							y = y + 1
						n = n - 1
				\end{lstlisting}
				\caption{\deprv: Product of dependent random variables}
				\label{fig:ex10}
			\end{figure*}

			\begin{figure*}
				\begin{lstlisting}
					int x, float p1, p2
					while(x == 0):
						d1 = bernoulli.rvs(size=1, p=p1)[0]
						if(d1):
							x = 0
						else:
							d2 = bernoulli.rvs(size=1, p=p2)[0]
							if(d2):
								x = -1
							else:
								x = 1
				\end{lstlisting}
				\caption{\prinsys: Listing 2 from Prinsys paper}
				\label{fig:ex12}
			\end{figure*}

			\begin{figure*}
				\begin{lstlisting}
					bool c, t, float  p1, p2
					while(c == 1):
						if t:
							d1 = bernoulli.rvs(size=1, p=p1)[0]
							if d1:
								c = 0
							else:
								t = not t
						else:
							d2 = bernoulli.rvs(size=1, p=p2)[0]
							if d2:
								c = 0
							else:
								t = not t
				\end{lstlisting}
				\caption{\duel: Duelling cowboys}
				\label{fig:ex19}
			\end{figure*}

			\begin{figure*}
				\begin{lstlisting}
					int x, count
					while(x <= 10):
						x = x + 1
						count = count + 1
				\end{lstlisting}
				\caption{\detm: Deterministic loop}
				\label{fig:ex15a}
			\end{figure*}

			\begin{figure*}
				\begin{lstlisting}
					int x, z, float p
					while(x-1 >= 0):
						d = bernoulli.rvs(size=1, p=p)[0]
						if(d):
							x = x - 1
						z = z + 1
				\end{lstlisting}
				\caption{\revbin: A ``reversed'' binomial distribution}
				\label{fig:ex21}
			\end{figure*}

		\end{appendices}
	\clearpage
	\bibliography{fmsd}
	\end{document}